\theoremstyle{plain}
\newtheorem{theorem}{Theorem}[section]
\newtheorem{proposition}[theorem]{Proposition}
\newtheorem{lemma}[theorem]{Lemma}
\theoremstyle{definition}
\newtheorem{definition}[theorem]{Definition}
\newtheorem{remark}[theorem]{Remark}
\newtheorem{example}[theorem]{Example}
\theoremstyle{remark}
\renewenvironment{thebibliography}[1]{%
\begin{oldthebibliography}{#1}%
\setlength{\baselineskip}{.9em}
\linespread{1}
\small
\setlength{\parskip}{0.2ex}%
\setlength{\itemsep}{.3em}%
}%
{%
\end{oldthebibliography}%
}
\newcommand{\eps}{\varepsilon}
\newcommand{\N}{\mathbb{N}}
\newcommand{\R}{\mathbb{R}}
\newcommand{\cA}{\mathcal{A}}
\newcommand{\cB}{\mathcal{B}}
\newcommand{\cD}{\mathcal{D}}
\newcommand{\cF}{\mathcal{F}}
\newcommand{\cM}{\mathcal{M}}
\newcommand{\cP}{\mathcal{P}}
\newcommand{\fP}{\mathfrak{P}}
\DeclareMathOperator{\graph}{graph}
\DeclareMathOperator{\NA2}{NA_{2}}
\DeclareMathOperator{\PCE}{PCE}
\DeclareMathOperator{\Int}{int}
\newcommand{\as}{\mbox{-a.s.}}
\newcommand{\qs}{\mbox{-q.s.}}
\newcommand{\1}{\mathbf{1}}
\newcommand{\br}[1]{\langle #1 \rangle}
\numberwithin{equation}{section}
\begin{document}

\title{\vspace{-0em}
\mbox{Consistent Price Systems under Model Uncertainty}
\date{\today}
\author{
     Bruno Bouchard%
     \thanks{CEREMADE, Universit\'e Paris Dauphine and CREST-ENSAE, bouchard@ceremade.dauphine.fr. Research supported by ANR Liquirisk and Investissements d'Avenir (ANR-11-IDEX-0003/Labex Ecodec/ANR-11-LABX-0047).
     }
     \and
  Marcel Nutz%
  \thanks{
  Depts.\ of Statistics and Mathematics, Columbia University, mnutz@columbia.edu. Research supported by NSF Grant DMS-1208985.
  }
 }
}
\maketitle %\vspace{-1em}

\begin{abstract} We develop a version of the fundamental theorem of asset pricing for discrete-time markets with proportional transaction costs and model uncertainty. A robust notion of no-arbitrage of the second kind is defined and shown to be equivalent to the existence of a collection of strictly consistent price systems.
\end{abstract}

\vspace{3em}

{\small
\noindent \emph{Keywords} Transaction Costs; Arbitrage of Second Kind; Consistent Price System; Model Uncertainty

\vspace{1em}

\noindent \emph{AMS 2010 Subject Classification}
%%60H05 %Stochastic integrals
60G42; %Martingales with discrete parameter
91B28; %Finance, portfolios, investment
93E20; %Optimal stochastic control
49L20 %Dynamic programming method
%%60G44, %Martingales with continuous parameter
%%91B30  %Risk theory, insurance
}

%%%%%%%%%%%%%%%%%%%%%%%%%%%%%%%%%%%%%
\section{Introduction}\label{se:intro}

In the setting of frictionless financial market models, the fundamental theorem of asset pricing states that the absence of arbitrage opportunities is equivalent to the existence of an equivalent martingale measure. The nontrivial implication is to pass from the former to the latter, and is usually obtained by a Hahn--Banach separation argument in the context of $L^{p}$-spaces, as in the celebrated Kreps--Yan theorem~\cite{Yan.80}; we refer to \cite{DelbaenSchachermayer.06,FollmerSchied.04} for background. A similar reasoning still applies in the presence of proportional transaction costs as in \cite{KabanovRasonyiStricker.02,Schachermayer.04}; see also \cite{KabanovSafarian.09} for a monograph account. Clearly, such duality arguments depend on the presence of a reference probability measure.

In the setting of model uncertainty, the market is defined with respect to a family $\cP$ of (typically singular) probability measures. Each element is  interpreted as a possible model and arbitrage is defined in a robust way with respect to $\cP$. Different notions of arbitrage and related results have been discussed in the literature where transactions are costless, among them \cite{AcciaioBeiglbockPenknerSchachermayer.12, BouchardNutz.13, BurzoniFrittelliMaggis.14, DavisHobson.07, DolinskySoner.12, FernholzKaratzas.11, Riedel.11}. In particular, \cite{BouchardNutz.13} defines an arbitrage as a trade which can be entered for free and leads to a terminal wealth which is nonnegative but not vanishing in the $\cP$-quasi-sure sense. In this context, the classical duality theory does not apply because of the lack of a reference measure, and it seems that direct adaptations lead to somewhat abstract generalizations of martingale measures or fairly stringent compactness assumptions (e.g., \cite{DeparisMartini.04,Nutz.13}). This difficulty is avoided in \cite{BouchardNutz.13} by applying finite-dimensional separation arguments in a local fashion, similarly as in the original proof of the Dalang--Morton--Willinger theorem~\cite{DalangMortonWillinger.90}, and then passing to the multi-period setting by selection arguments. Thus, it is crucial that, for the no-arbitrage condition of \cite{BouchardNutz.13}, absence of local arbitrage is equivalent to absence of global arbitrage. 

An extension of \cite{BouchardNutz.13} to models with (proportional) transaction costs was recently proposed in~\cite{BayraktarZhang.13}, where quasi-sure versions of the classical \emph{weak} and \emph{robust} no-arbitrage conditions of~\cite{KabanovStricker.01trans} and~\cite{Schachermayer.04} are used. As in \cite{BouchardNutz.13}, the authors start with a local argument for one-period models. However, due to the transaction costs, absence of local arbitrage is no longer equivalent to absence of global arbitrage in their sense. Using an intricate forward-backward construction,~\cite{BayraktarZhang.13} nevertheless succeeds in passing from the one-period to the multi-period case. However, as mentioned by the authors, this method necessitates rather stringent continuity conditions on the price process and the set-valued process that governs the structure of $\cP$, cf.\ \cite[Assumption 2.1]{BayraktarZhang.13}, and its scope seems to be limited to markets with a single risky asset.       
 
The purpose of this paper is to introduce a different generalization of the frictionless no-arbitrage condition, one which naturally lends itself to the setting of transaction costs under model uncertainty and allows for a general version of the fundamental theorem of asset pricing. We consider a robust version of \emph{no-arbitrage of the second kind,} a condition first used by \cite{Rasonyi.09} in the context of transaction costs under the name ``no sure gain in liquidation value''; see also~\cite[Section~3.2.6]{KabanovSafarian.09} as well as \cite{BouchardTaflin.13} and the references therein. Our version postulates that a position cannot be $\cP$-quasi-surely solvent tomorrow if it is not already solvent today; cf.\  Definition~\ref{def: NA2} for the precise statement. The same condition can be stated in a global fashion, in terms of terminal positions, but the local and global notions remain equivalent as can be seen immediately (Remark~\ref{rem: local vs global}). Thus, the local-to-global philosophy can be applied naturally in our framework, and this leads to our version of the fundamental theorem  of asset pricing. 

Namely, assuming nontrivial transaction costs (\emph{efficient friction} in the terminology of \cite{KabanovRasonyiStricker.02}), we show that robust no-arbitrage of the second kind is equivalent to the existence of a family of strictly consistent price systems which are ``strictly consistently extendable'' in the language of~\cite{Rasonyi.09}. This family is rich enough to have the same polar sets as $\cP$; in particular, we retrieve the main result of~\cite{Rasonyi.09} when $\cP$ is a singleton. See Definition~\ref{de:PCE} and Theorem~\ref{th:FTAP} for the precise statements which are conveniently formulated in the language of Kabanov's general model of solvency cones; it includes, for instance, Schachermayer's model of exchange rates~\cite{Schachermayer.04}, see also Example~\ref{ex:pi}.

A different fundamental theorem was obtained in \cite{DolinskySoner.13} as a corollary of a superhedging result. In \cite{DolinskySoner.13}, there is a single risky asset with constant proportial transaction cost, represented by the canonical process on path space, and all paths are considered possible models; in addition, options are traded statically. In our multidimensional setting, options can be modeled dynamically like any other asset, using corresponding transaction costs.

The remainder of this paper is organized as follows. In Section~\ref{se:main}, we describe the setting in detail and state the main result together with the pertinent definitions. The proof is reported in Section~\ref{se:proof}; we first consider a single-period market and then pass the multi-period setting. Some notions and results of measure theory are collected in the Appendix for the convenience of the reader.

%%%%%%%%%%%%%%%%%%%%%%%%%%%%%%%%%%%%%
\section{Main Result}\label{se:main}

\subsection{Probabilistic Structure}

Let $T\in\N$ and let $\Omega_1$ be a Polish space. For $t\in\{0,1,\dots,T\}$, let $\Omega_t:=\Omega_1^t$ be the $t$-fold Cartesian
product, with the convention that $\Omega_0$ is a singleton, and let $\cF_t$ be the universal completion of  the Borel $\sigma$-field $\cB(\Omega_t)$. We write $(\Omega,\cF)$ for $(\Omega_T,\cF_T)$; this will be our basic measurable space and we shall often see $(\Omega_t,\cF_t)$ as a subspace of $(\Omega,\cF)$. We denote by $\fP(\Omega)$ the set of all probability measures on $\cB(\Omega)$ (or equivalently on $\cF$), equipped with the topology of weak convergence. For each $t\in\{0,1,\dots,T-1\}$ and $\omega\in\Omega_t$, we are given a nonempty convex set
$\cP_t(\omega)\subset \fP(\Omega_1)$, the possible models for the $t$-th period given the state $\omega$ at time $t$. We assume that for each $t$,
\[
  \graph(\cP_t):=\{(\omega,P):\, \omega\in \Omega_t,\, P\in\cP_t(\omega)\}\subset \Omega_t \times \fP(\Omega_t)\quad \mbox{is analytic.}
\]
This ensures that $\cP_t$ admits a universally measurable selector (cf.\ the Appendix); that is, a universally measurable kernel $P_t:\, \Omega_t\to \fP(\Omega_1)$ such that $P_t(\omega)\in \cP_t(\omega)$ for all $\omega\in\Omega_t$. If we are given such a kernel $P_t$ for each $t\in\{0,1,\dots T-1\}$, we can define a probability $P$ on $\Omega$ by Fubini's theorem,
\[
  P(A)=\int_{\Omega_1} \cdots \int_{\Omega_1} \1_A(\omega_1,\dots,\omega_T) P_{T-1}(\omega_1,\dots,\omega_{T-1}; d\omega_T)\cdots P_0(d\omega_1),
\]
where we write $\omega=(\omega_1,\dots,\omega_T)$ for a generic element of $\Omega\equiv\Omega_1^T$. The above formula will be
abbreviated as $P=P_0\otimes P_1\otimes \cdots \otimes P_{T-1}$ in the sequel. We can then introduce the set $\cP\subset\fP(\Omega)$ of models for the market up to time~$T$,
\[
  \cP:=\{P_0\otimes P_1\otimes \cdots \otimes P_{T-1}:\, P_t(\cdot)\in \cP_t(\cdot),\,t=0,1,\dots, T-1\},
\]
where each $P_t$ is a universally measurable selector of $\cP_t$. 

\subsection{Market Model}
We first introduce a general market model formulated in terms of random cones (cf.\ the Appendix for the pertinent definitions). In Example~\ref{ex:pi} we shall see how these cones arise in a typical example and how the technical conditions are naturally satisfied.

For each $t\in\{0,\ldots,T\}$, we consider a Borel-measurable random set $K_{t}: \Omega_{t}\to 2^{\R^{d}}$ . Each $K_{t}$ is a closed, convex cone containing $\R^{d}_{+}$, called the \emph{solvency cone} at time $t$. It represents the positions that are solvent; i.e., can be turned into nonnegative ones by immediate exchanges at current market conditions, see also Example~\ref{ex:pi} below. We denote by  $K^{*}_{t}$ its (nonnegative) dual cone,
$$
  K^{*}_{t}(\omega):=\{y\in \R^{d}: \br{x,y}\ge 0\mbox{ for all }x\in K_{t}(\omega)\}, \quad \omega\in \Omega_{t};
$$
here $\br{x,y}=\sum_{i=1}^{d} x^{i}y^{i}$ is the usual inner product on $\R^{d}$. We remark that $K^{*}_{t}\subset \R^{d}_{+}$ as $\R^{d}_{+}\subset K_{t}$ and that $K^{*}_{t}$ is again Borel-measurable; cf.\ Lemma~\ref{le:randomSets}. We assume that $K^{*}_{t}\cap \partial \R^{d}_{+}=\{0\}$, so every vector in $K^{*}_{t}\setminus\{0\}$ has strictly positive components. Moreover, we assume that
\begin{equation}\label{eq:assIntKstar}
 \Int K^{*}_{t}\ne \emptyset,\quad t\in\{0,\ldots,T\}.
\end{equation}
This condition is called \emph{efficient friction} since it embodies the presence of nontrivial transaction costs; cf.\ Example~\ref{ex:pi}. Finally, we assume that there is a constant $c>0$ such that
\begin{equation}\label{eq:KstarBound}
  x^{j}/y^{j} \leq c (x^{i}/y^{i}),\quad 1\leq i,j\leq d, \quad x,y\in K^{*}_{t}(\omega)\setminus\{0\}
\end{equation}
for all $\omega\in\Omega_{t}$ and $t\in\{0,\ldots,T\}$. This is equivalent to the seemingly weaker requirement that the above holds for all $x\in K^{*}_{t}(\omega)\setminus\{0\}$ and one fixed $y\in K^{*}_{t}(\omega)\setminus\{0\}$, which corresponds to a choice of numeraire. It is also equivalent to assume that~\eqref{eq:KstarBound} holds only for $i=1$. One can think of the division by $y$ as a normalization by a numeraire, and if we consider $d=2$ for simplicity, \eqref{eq:KstarBound} means that the angles between $K^{*}_{t}\subset \R^{2}_{+}$ and the coordinate axes are bounded away from zero after the normalization. In financial terms, \eqref{eq:KstarBound} means that the proportion of transaction costs is bounded from above, as we will see momentarily.

\begin{example}\label{ex:pi}
  We consider a market with $d$ assets (e.g., currencies) that is described by a matrix $(\pi^{ij}_{t})_{1\leq i,j\leq d}$ at every time $t$. The interpretation is that at time $t$, one can exchange $a\pi^{ij}_{t}$ units of asset $i$ for $a$ units of asset $j$ (where $a\geq0$); thus, $\pi^{ij}_{t}$ is the exchange rate between the assets, including transaction costs, and it is natural to assume that $\pi^{ij}_{t}$ is adapted and positive. This so-called model with physical units includes models with bid-ask spread or models with frictionless price and transaction cost coefficients; cf.\  \cite{KabanovSafarian.09,Schachermayer.04}.
  
  In the present setup, the solvency cone $K_{t}$ consists of those portfolios $x=(x^{1},\dots,x^{d})$ such that after a suitable exchange between the assets, the position in every asset is nonnegative; in formulas,
  $$
   K_{t}:=\bigg\{x\in \R^{d}:\, \exists\; (a^{ij})_{ij}\in \R_{+}^{d\times d}\mbox{ s.t. } x^{i}+\sum_{j\ne i} a^{ji}- a^{ij} \pi^{ij}_{t}\ge 0,\;\;1\leq i\leq d\bigg\}.
  $$
  This is also the convex cone generated by $\{e_{i}, \pi^{ij}_{t}e_{i}-e_{j}:\, 1\leq i,j\leq d\}$, where $(e_{i})_{i\le d}$ denotes the standard basis of $\R^{d}$. Thus, the dual cone is given by
  $$
    K^{*}_{t}=\big\{y\in \R^{d}_{+}:\, y^{j}\leq y^{i}\pi^{ij}_{t}, \;\;1\leq i,j\leq d\big\}.
  $$
  We may assume without loss of generality that $\pi^{ij}_{t}\leq \pi^{ik}_{t}\pi^{kj}_{t}$, meaning that direct exchanges are not more expensive than indirect ones.

  Let us now discuss the conditions introduced above. It is clear that $\R^{d}_{+}\subset K_{t}$ and $K^{*}_{t}\cap \partial \R^{d}_{+}=\{0\}$ are always satisfied. Moreover, if $\pi^{ij}_{t}: \Omega_{t}\to\R$ is Borel-measurable, it follows that $K_{t}$ is also Borel-measurable.
  %NOTE: K^{*}_{t} is Borel as the sublevel-set of a Caratheodory function. Then so is K_{t}.
  Let us assume that
  \begin{equation}\label{eq:posTransCost}
    \pi^{ij}_{t}\pi^{ji}_{t}>1,\quad 1\leq i\neq j \leq d,
  \end{equation}
  meaning that there is a nonvanishing transaction cost for a round-trip between two assets. This condition is equivalent to $\Int K^{*}_{t}\ne \emptyset$, which was our requirement of efficient friction in~\eqref{eq:assIntKstar}. Indeed, by an elementary calculation, \eqref{eq:posTransCost} is equivalent to the angle between any two of the vectors spanning $K_{t}$ being strictly less than $180$ degrees, and this is in turn equivalent to $\Int K^{*}_{t}\ne \emptyset$. 
  Finally, let us assume that
 $$
   \pi^{1i}_{t}\pi^{i1}_{t}\leq c
 $$ 
 for a constant $c>0$, meaning that the cost of a round-trip is bounded---more precisely, no more that $c$ units of the first asset are necessary to end up with one unit of the same asset after the round-trip. 
 As $\pi^{ij}_{t}\leq \pi^{i1}_{t}\pi^{1j}_{t}$, this is equivalent to the symmetric condition  $\pi^{ij}_{t}\pi^{ji}_{t}\leq c$, $1\leq i\neq j \leq d$ (possibly after replacing $c$ by $c^{2}$) and we shall see that it is also equivalent to~\eqref{eq:KstarBound}. Indeed, let $S_{t}$ take values in $K^{*}_{t}\setminus\{0\}$; one may think of $S^{i}_{t}$ as a frictionless price for asset $i$, denominated in an external currency. Since $S_{t}$ then has positive components, we may introduce another matrix $(\lambda^{ij}_{t})$ via
 $$
   S^{j}_{t}(1+\lambda^{ij}_{t}) = \pi^{ij}_{t} S^{i}_{t}.
 $$
 Thus, $\lambda^{ij}_{t}$ can be interpreted as the proportional transaction cost incurred when exchanging asset $i$ for asset $j$; note that $S_{t}\in K^{*}_{t}$ implies $\lambda^{ij}_{t}\geq0$. Our condition that $\pi^{ij}\pi^{ji}\leq c$ yields that
 $$
   1+\lambda^{ij}_{t} = \pi^{ij}_{t} S^{i}_{t}/S^{j}_{t} \leq \pi^{ij}_{t}\pi^{ji}_{t}\leq c;
 $$
 that is, the proportional transactions costs are bounded from above. If $X_{t}$ is another element of $K^{*}_{t}$, then
 $$
   X^{j}_{t}\leq X^{i}_{t}\pi^{ij}_{t}=X^{i}_{t}S^{j}_{t}(1+\lambda^{ij}_{t})/S^{i}_{t}\leq c X^{i}_{t}S^{j}_{t}/S^{i}_{t}
 $$
 and thus $X^{j}_{t}/S^{j}_{t}\leq c (X^{i}_{t}/S^{i}_{t})$, which was our assumption in~\eqref{eq:KstarBound}.
\end{example}

\subsection{Fundamental Theorem}

Given $A\subset \R^{d}$, let us write $L^{0}(\cF_{t};A)$ for the set of all $\cF_{t}$-measurable, $A$-valued functions. Similarly, $L^{0}_{P}(\cF_{t};A)$ is the set of all $X\in L^{0}(\cF_{t};\R^{d})$ which are $P$-a.s.\ $A$-valued.

We now introduce a robust version of  the ``no arbitrage of the second kind'' condition; it states that a position which is not solvent today cannot be solvent tomorrow. In the context of transaction costs, this concept was introduced by \cite{Rasonyi.09} under the name ``no sure gain in liquidation value''. More precisely, in the robust version solvency means solvency for all possible models $P\in\cP$,  so let us agree that a property holds \emph{$\cP$-quasi surely} or \emph{$\cP$-q.s.\ }if it holds outside a $\cP$-polar set; that is, a set which is $P$-null for all $P\in\cP$.

\begin{definition}\label{def: NA2}
  We say that $\NA2(\cP)$ holds if for all $t\in\{0,\ldots,T-1\}$ and all $\zeta \in L^{0}(\cF_{t};\R^{d})$,
 $$
 \zeta \in K_{t+1}\;\;\cP\qs \quad\mbox{implies} \quad \zeta \in K_{t}\;\;\cP\qs
 $$
\end{definition}

\begin{remark}\label{rem: local vs global}
 Condition $\NA2(\cP)$ can be formulated equivalently in a global way. To this end, define a \emph{trading strategy} to be an adapted process $\xi$ such that $\xi_{t}\in -K_{t}$ for all $t\in\{0,\ldots,T\}$. This means that the position $\xi_{t}$ can be acquired at time $t$ at no cost, so that $\xi_{t}$ can be seen as the increment of a self-financing portfolio (possibly with consumption). Let $\Xi$ be the set of all trading strategies and consider the following condition: for all $t\in\{0,\ldots,T-1\}$, $\zeta \in L^{0}(\cF_{t};\R^{d})$ and $\xi\in\Xi$,
 \begin{equation}\label{eq:NAprime}
   \zeta+\xi_{t+1}+\cdots + \xi_{T}\in K_{T}\;\;\cP\qs \quad\mbox{implies} \quad \zeta \in K_{t}\;\;\cP\qs \tag{$\NA2(\cP)'$}
 \end{equation}
 To see that $\NA2(\cP)'$ implies $\NA2(\cP)$, it suffices to take $\xi_{t+1}=-\zeta$ $\cP$-q.s.\ and $\xi_{t+2}=\dots=\xi_{T}=0$. Conversely, let $\NA2(\cP)$ hold and $\zeta+\xi_{t+1}+\cdots + \xi_{T}\in K_{T}$ $\cP\qs$ As  $\zeta+\xi_{t+1}+\cdots + \xi_{T-1}\in K_{T}-\xi_{T}\subset K_{T}$ $\cP\qs$, it follows that $\zeta+\xi_{t+1}+\cdots + \xi_{T-1}\in K_{T-1}$ $\cP\qs$ Repeating this argument,  we find that $\zeta \in K_{t}$ $\cP\qs$, so $\NA2(\cP)'$ holds.
\end{remark} 
  
The following condition states that strictly consistent price systems exist for every $P\in\cP$, and more generally that any such system for $[0,t]$ can be extended to $[0,T]$: price systems are consistently extendable, hence the acronym PCE. For $Q\in\fP(\Omega)$, we write $Q\lll \cP$ if there exists $R\in\cP$ such that $Q\ll R$.

\begin{definition}\label{de:PCE} 
	We say that $\PCE(\cP)$ holds if for all $t\in\{0,\ldots,T-1\}$, $P\in \cP$ and $Y\in L^{0}_{P}(\cF_{t},\Int K^{*}_{t})$,
	there exist $Q\in\fP(\Omega)$ and an adapted process $(Z_{s})_{s=t,\dots,T}$ such that
	 \begin{enumerate}
	  \item  $P\ll Q\lll \cP$,
	  \item  $P=Q$ on $\cF_{t}$ and $Y=Z_{t}$ $P$-a.s.
	  \item  $Z_{s}\in \Int K^{*}_{s}$ $Q$-a.s.\ for $s=t,\dots,T$
	  \item  $(Z_{s})_{s=t,\dots,T}$ is a $Q$-martingale.	
  \end{enumerate}
\end{definition}

Conditions (i) and (ii) state that $(Z,Q)$ extends $(Y,P)$, whereas~(iii) and~(iv) state that $(Z,Q)$ is a strictly consistent price system in the sense of \cite{Schachermayer.04}. (The terminology suggests to think of $Y$ as the terminal value of a given consistent price system for $[0,t]$.) 

Let us observe that $Y$ as in Definition~\ref{de:PCE} always exists.

\begin{remark}\label{rk:intSpaceNonempty}
  It follows from~\eqref{eq:assIntKstar} and Lemma~\ref{le:randomSets} that $L^{0}_{P}(\cF_{t};\Int K^{*}_{t})$ is nonempty. More precisely, using Lemma~\ref{le:randomSets} and projecting onto the unit ball, we see that there even exists a bounded $Y\in L^{0}(\cF_{t};\Int K^{*}_{t})$ which is Borel-measurable.
\end{remark}
 
We can now state the main result, a robust fundamental theorem of asset pricing in the spirit of~\cite{Rasonyi.09}. 
 
\begin{theorem}\label{th:FTAP}
  The conditions $\NA2(\cP)$ and $\PCE(\cP)$ are equivalent. 
\end{theorem}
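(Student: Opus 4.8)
I would prove the two implications separately, treating $\PCE(\cP)\Rightarrow\NA2(\cP)$ as the routine direction and $\NA2(\cP)\Rightarrow\PCE(\cP)$ as the substantial one.

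\medskip
\noindent\emph{The easy direction.} Assume $\PCE(\cP)$ and let $\zeta\in L^{0}(\cF_{t};\R^{d})$ satisfy $\zeta\in K_{t+1}$ $\cP$-q.s.; suppose toward a contradiction that $P(\zeta\notin K_{t})>0$ for some $P\in\cP$. On the measurable set $\{\zeta\notin K_{t}\}$, separation of the point $\zeta$ from the closed convex cone $K_{t}$, combined with the density of $\Int K^{*}_{t}$ in $K^{*}_{t}$ and Remark~\ref{rk:intSpaceNonempty}, lets me select (measurably, after truncating $|\zeta|$ to ensure integrability) some $Y\in L^{0}_{P}(\cF_{t};\Int K^{*}_{t})$ with $\br{Y,\zeta}<0$ on a set of positive $P$-measure. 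Applying $\PCE(\cP)$ to $(P,Y)$ produces $(Q,Z)$; since $Q\lll\cP$, the relation $\zeta\in K_{t+1}$ holds $Q$-a.s., whence $\br{Z_{t+1},\zeta}\ge0$ $Q$-a.s. because $Z_{t+1}\in K^{*}_{t+1}$. The martingale property and the $\cF_{t}$-measurability of $\zeta$ then give $\br{Y,\zeta}=\br{Z_{t},\zeta}=\E_{Q}[\br{Z_{t+1},\zeta}\mid\cF_{t}]\ge0$ $Q$-a.s., contradicting $\br{Y,\zeta}<0$ on a set of positive $Q$-measure (recall $Q=P$ on $\cF_{t}$).

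\medskip
\noindent\emph{Reduction of the hard direction to one period.} First I would disintegrate the condition: assuming $\NA2(\cP)$, I would show that for $\cP$-q.e.\ $\omega\in\Omega_{t}$ one has $\hat K_{t}(\omega)\subset K_{t}(\omega)$, where $\hat K_{t}(\omega):=\{z\in\R^{d}:\, z\in K_{t+1}(\omega,\cdot)\ \cP_{t}(\omega)\text{-q.s.}\}$, verifying the implication on a countable dense set of $z$ and invoking closedness of the cones. The task then reduces, at a fixed $\omega$, to the following one-period extension: given $y=Y(\omega)\in\Int K^{*}_{t}(\omega)$, find $R\in\cP_{t}(\omega)$ dominating the prescribed kernel of $P$ together with $Z_{t+1}\in\Int K^{*}_{t+1}$ satisfying $\E_{R}[Z_{t+1}]=y$.

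\medskip
\noindent\emph{The one-period convex-analytic core.} Consider the convex cone $D(\omega)$ of all barycenters $\E_{Q'}[Z']$ with $Q'\lll\cP_{t}(\omega)$ and $Z'\in K^{*}_{t+1}$ $Q'$-a.s. A direct computation of its polar gives $D^{*}=\hat K_{t}$: point masses supported where $\cP_{t}(\omega)$ charges yield $D^{*}\subset\hat K_{t}$, while $Q'\lll\cP_{t}(\omega)$ gives the reverse inclusion. By the bipolar theorem $\overline{D}=\hat K^{*}_{t}\supset K^{*}_{t}$, the last inclusion being precisely the local $\NA2$ just established. Hence $\Int K^{*}_{t}\subset\Int\overline{D}=\Int D\subset D$, so $y=\E_{Q_{1}}[Z_{1}]$ for some admissible $(Q_{1},Z_{1})$. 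To upgrade to the strict interior and to domination I would use convexity of $\cP_{t}(\omega)$ to pick $R\in\cP_{t}(\omega)$ dominating both $Q_{1}$ and the prescribed kernel, take a bounded Borel $W\in\Int K^{*}_{t+1}$ as in Remark~\ref{rk:intSpaceNonempty}, and use $y\in\Int D$ to write $y-\epsilon\,\E_{R}[W]\in D$ for small $\epsilon>0$; setting $Z_{t+1}:=Z_{1}\,(dQ_{1}/dR)+\epsilon W$ then gives $Z_{t+1}\in\Int K^{*}_{t+1}$ $R$-a.s.\ with $\E_{R}[Z_{t+1}]=y$, and \eqref{eq:KstarBound} secures integrability through the comparability of the components of elements of $K^{*}_{t+1}$.

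\medskip
\noindent\emph{Measurability, concatenation, and the main obstacle.} Every choice above must be made universally measurably in $\omega$; this is where the selection theorems of the Appendix enter, through the measurability of the graphs of the random cones and of $D$ and a Jankov--von Neumann selection, yielding a genuine kernel $R_{t}(\cdot)\in\cP_{t}(\cdot)$ and an $\cF_{t+1}$-measurable $Z_{t+1}$. Finally I would pass to the multi-period statement: starting from $(P,Y)$ at time $t$, I apply the one-period extension node by node, at step $s$ replacing the reference measure by the one just constructed and $Y$ by $Z_{s}\in\Int K^{*}_{s}$, using $\NA2(\cP)$ at each node (whose local-to-global coherence is recorded in Remark~\ref{rem: local vs global}). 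Concatenating the kernels, while keeping the original kernels of $P$ before time $t$ so that $Q=P$ on $\cF_{t}$, produces $Q\in\cP$ with $P\ll Q$ and an adapted process $(Z_{s})_{s=t,\dots,T}$ that is a $Q$-martingale with $Z_{s}\in\Int K^{*}_{s}$, i.e.\ $\PCE(\cP)$. I expect the main obstacle to be exactly the measurable one-period construction: marrying the fiberwise convex-analytic argument to the selection machinery so that the dominating kernel $R$ and the strict price $Z_{t+1}$ depend measurably on $\omega$, all while keeping integrability uniform via \eqref{eq:KstarBound}.
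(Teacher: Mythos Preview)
Your skeleton matches the paper's almost exactly: the easy direction is handled the same way, and for the hard direction the paper also reduces to a local one-period statement (your $\hat K_t$ is the paper's $\Lambda_t$), proves the one-period extension, upgrades it to a measurable selection, and concatenates. Two points deserve comment, one a genuine gap and one a difference in packaging.

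\textbf{The point-mass argument is wrong.} For $D^*\subset\hat K_t$ you invoke Dirac masses ``where $\cP_t(\omega)$ charges''. But if every $P\in\cP_t(\omega)$ is atomless, no Dirac mass is dominated by an element of $\cP_t(\omega)$, so none is admissible as $Q'$ and the argument collapses. The correct route---which the paper uses in its separation step---is: given $z\notin\hat K_t$, pick $P'\in\cP_t(\omega)$ with $P'\{z\notin K_{t+1}\}>0$, then \emph{measurably select} a bounded $Z'$ with $Z'\in K^*_{t+1}$ and $\br{z,Z'}<0$ on $\{z\notin K_{t+1}\}$, $Z'=0$ elsewhere, to obtain $E^{P'}[\br{z,Z'}]<0$.

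\textbf{The upgrade step is circular as written.} You choose $R$ to dominate $Q_1$ and the prescribed kernel, then use $y-\eps E_R[W]\in D$; but representing this new point requires a \emph{new} pair $(Q_1',Z_1')$, and nothing guarantees $Q_1'\ll R$. If instead you keep your original $(Q_1,Z_1)$ with $E_{Q_1}[Z_1]=y$, your formula gives $E_R[Z_1\,dQ_1/dR+\eps W]=y+\eps E_R[W]\neq y$. The paper sidesteps this by building both the domination $P\ll R$ and the strict interior $Y\in\Int K^*_{t+1}$ into the set $\Theta_P$ from the outset, proving in Lemma~\ref{le:ThetaP} that $\Theta_P$ is convex with nonempty interior (via perturbations $Y\pm\eps_i e_i$), and then separating directly. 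Your bipolar computation of $D^*=\hat K_t$ is an elegant alternative to the paper's separating-hyperplane step, but you would still need a clean argument that $\Int D$ already lies in the smaller set~$\Theta_P$; the simplest fix is to imitate the paper and work with $\Theta_P$ throughout.

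Two smaller remarks: your concluding claim ``$Q\in\cP$'' should be $Q\lll\cP$ (the paper constructs $P\ll Q\ll R\in\cP$), and for measurability the paper's key device---encoding the pair $(\tilde Q,Y)$ as a vector of probabilities $(Q^i)_{i\le d}$ together with a scale $\alpha$, so that selection takes place in the Polish space $\fP(\Omega_1)^d\times(0,\infty)^d$ rather than in an $L^0$-space---is exactly the missing ingredient you flag as the main obstacle.
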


%%%%%%%%%%%%%%%%%%%%%%%%%%%%%%%%%%%%%
\section{Proof of the Main Result}\label{se:proof}
 
One implication of Theorem~\ref{th:FTAP} is straightforward to prove.

\begin{proof}[Proof that $\PCE(\cP)$ implies $\NA2(\cP)$.]
  Let $t<T$ and let $\zeta \in L^{0}(\cF_{t};\R^{d})$ be such that $\zeta\in K_{t+1}$ $\cP\qs$; we need to show that $\zeta\in K_{t}$ $\cP\qs$ Suppose for contradiction that there is $P\in\cP$ such that $\{\zeta\notin K_{t}\}$ is not $P$-null. Let $\zeta'$ be a Borel-measurable function such that $\zeta'=\zeta$ $P$-a.s.\ (cf.\  \cite[Lemma 7.27]{BertsekasShreve.78}). In view of Lemma~\ref{le:randomSets}, the set
  $$
   \{(\omega,y)\in \Omega_{t}\times \R^{d} : y\in \Int K^{*}_{t}(\omega)\mbox{ and } \br{y,\zeta'(\omega)} <0 \}
  $$
  is Borel. Using measurable selection (Lemma~\ref{le:JvNselectionThm}) and Remark~\ref{rk:intSpaceNonempty}, we may find a universally measurable $Y\in L^{0}(\cF_{t};\Int K^{*}_{t})$ such that $\br{Y,\zeta'}<0$ on $\{\zeta'\notin K_{t}\}$ and in particular 
  \begin{equation}\label{eq:easySideForControd}
   \br{Y,\zeta}<0\;\; P\as \quad\mbox{on}\quad \{\zeta\notin K_{t}\}.
  \end{equation}
  Let $(Z,Q)$ be an extension of $(Y,P)$ as stated in the definition of $\PCE(\cP)$. Using the martingale property of $Z$ and $P\ll Q$ we have
  $$
   0\le E^{Q}[\br{Z_{t+1},\zeta} |\cF_{t}]= \br{Z_{t},\zeta}=\br{Y,\zeta} \quad P\as
  $$
  This contradicts~\eqref{eq:easySideForControd} since $\{\zeta\notin K_{t}\}$ is not $P$-null.
\end{proof}

The reverse implication of Theorem~\ref{th:FTAP} is proved in the remainder of this section.

\subsection{The One-Period Case}\label{se:onePeriod}

In this section, we restrict to the one period model $T=1$. We write $L^{\infty}(\cF_{1};\Int K^{*}_{1})$ for the set of uniformly bounded functions $Y\in L^{0}(\cF_{1};\Int K^{*}_{1})$.

\begin{proposition}\label{pr:oneStep} 
  Let $\NA2(\cP)$ hold. For every $P\in\cP$ and every $y\in \Int K^{*}_{0}$ there exist $P \ll  R\lll \cP$ and $Y\in L^{\infty}(\cF_{1};\Int K^{*}_{1})$ such that $y=E^{R}[Y]$. In particular, $\NA2(\cP)$ implies $\PCE(\cP)$.
\end{proposition}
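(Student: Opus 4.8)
The plan is to realise $y$ as the barycenter of a strictly consistent price density via a Hahn--Banach separation in $\R^{d}$, with robustness entering through the choice of the dominating measure. I would work with the set of attainable first-period prices
\[
  \cA := \big\{ E^{R}[Y] :\ Y\in L^{\infty}(\cF_{1};\Int K^{*}_{1}),\ P\ll R \lll\cP \big\}\subset \R^{d}.
\]
The first step is to verify that $\cA$ is a convex cone. Stability under positive scalars is clear, and given $E^{R_{1}}[Y_{1}],E^{R_{2}}[Y_{2}]\in\cA$ and $\lambda\in(0,1)$ one sets $R:=\lambda R_{1}+(1-\lambda)R_{2}$ and $Y:=\lambda\tfrac{dR_{1}}{dR}Y_{1}+(1-\lambda)\tfrac{dR_{2}}{dR}Y_{2}$. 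Then $P\ll R\lll\cP$ again holds (using convexity of $\cP$), the densities $\tfrac{dR_{i}}{dR}$ are bounded by $1/\lambda$ and $1/(1-\lambda)$ so that $Y$ stays bounded, and since $\lambda\tfrac{dR_{1}}{dR}+(1-\lambda)\tfrac{dR_{2}}{dR}=1$ the value $Y(\omega)$ is a convex combination of points of $\Int K^{*}_{1}(\omega)$ and hence again lies in $\Int K^{*}_{1}(\omega)$. Thus $\lambda E^{R_{1}}[Y_{1}]+(1-\lambda)E^{R_{2}}[Y_{2}]=E^{R}[Y]\in\cA$.

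The crux is to show $K^{*}_{0}\subseteq\overline{\cA}$. Suppose not; since $\overline{\cA}$ is a closed convex cone, there are $\bar y\in K^{*}_{0}$ and $\zeta\in\R^{d}\setminus\{0\}$ with $\br{\zeta,a}\ge0$ for all $a\in\cA$ and $\br{\zeta,\bar y}<0$. Reading $\br{\zeta,E^{R}[Y]}=E^{R}[\br{\zeta,Y}]\ge0$ for every admissible pair, I would deduce that $\zeta\in K_{1}$ holds $\cP$-q.s. Indeed, were this to fail, some $P'\in\cP$ would charge the Borel set $B:=\{\zeta\notin K_{1}\}$, and for the admissible measure $R:=\tfrac12(P+P')\in\cP$ one has $R(B)>0$. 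Using Lemma~\ref{le:randomSets} and measurable selection (Lemma~\ref{le:JvNselectionThm}), I would pick a bounded $\cF_{1}$-measurable $Y_{0}\in K^{*}_{1}$ attaining $\br{\zeta,Y_{0}}=\min\{\br{\zeta,v}:v\in K^{*}_{1},\,|v|\le1\}$, which is $\le-\eta<0$ on a subset $B_{\eta}\subseteq B$ of positive $R$-measure for some $\eta>0$. Perturbing into the interior with the bounded Borel selector $W\in L^{\infty}(\cF_{1};\Int K^{*}_{1})$ of Remark~\ref{rk:intSpaceNonempty}, the function $Y:=(Y_{0}+\eps W)\1_{B_{\eta}}+\eps W\1_{B_{\eta}^{c}}$ lies in $L^{\infty}(\cF_{1};\Int K^{*}_{1})$ and satisfies $E^{R}[\br{\zeta,Y}]\le-\tfrac{\eta}{2}R(B_{\eta})+\eps\|\br{\zeta,W}\|_{\infty}<0$ for $\eps$ small, contradicting $\br{\zeta,E^{R}[Y]}\ge0$. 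Hence $\zeta\in K_{1}$ $\cP$-q.s., so $\NA2(\cP)$ forces $\zeta\in K_{0}$; but then $\br{\zeta,\bar y}\ge0$ as $\bar y\in K^{*}_{0}$, contradicting $\br{\zeta,\bar y}<0$. This robust selection step---combining the family of test measures $\tfrac12(P+P')$ with an interior selection of dual vectors to reach the full family of polar sets of $\cP$---is the part I expect to require the most care.

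It remains to upgrade closure membership to an exact representation. Since $\cA$ is convex and $\overline{\cA}\supseteq K^{*}_{0}$ has nonempty interior by efficient friction~\eqref{eq:assIntKstar}, the set $\cA$ is full-dimensional, so $\Int\cA=\Int\overline{\cA}\supseteq\Int K^{*}_{0}$. Thus, for the given $P$ and any $y\in\Int K^{*}_{0}$, we have $y\in\cA$, i.e.\ there exist $P\ll R\lll\cP$ and $Y\in L^{\infty}(\cF_{1};\Int K^{*}_{1})$ with $y=E^{R}[Y]$, which is the assertion. Finally, since $\cF_{0}$ is trivial, taking $Q:=R$, $Z_{0}:=y$ and $Z_{1}:=Y$ yields a strictly consistent price system extending $(y,P)$ in the sense of Definition~\ref{de:PCE}, so the one-period case of $\NA2(\cP)\Rightarrow\PCE(\cP)$ follows.
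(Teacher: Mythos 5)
Your proof is correct, and its skeleton is the same as the paper's: both work with the set $\cA=\Theta_{P}$ of attainable expectations, establish convexity, separate a point of $K^{*}_{0}$ from its closure, use measurable selection to build a test function supported on $\{\zeta\notin K_{1}\}$, and invoke $\NA2(\cP)$ to force $\zeta\in K_{0}$, contradicting the separation. Two of your steps, however, genuinely deviate from the paper's route, and both are mild simplifications. First, the paper proves \emph{up front} (Lemma~\ref{le:ThetaP}) that $\Theta_{P}$ has nonempty interior, via perturbations $Y\pm\epsilon_{i}e_{i}$ with measurably selected $\epsilon_{i}$; it needs this before separating, in order to reduce ``$\Int K^{*}_{0}\subset\Theta_{P}$'' to ``$K^{*}_{0}\subset\bar\Theta_{P}$''. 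You instead prove the inclusion $K^{*}_{0}\subseteq\overline{\cA}$ first and then extract full-dimensionality of $\cA$ for free from efficient friction ($\Int K^{*}_{0}\neq\emptyset$), giving $\Int\cA=\Int\overline{\cA}\supseteq\Int K^{*}_{0}$; this eliminates one measurable-selection argument entirely, at the cost of invoking the standard convex-analysis fact that a convex set whose closure has interior points is itself full-dimensional. Second, where the paper tests against $R=\eps P+(1-\eps)P'$, lets $\eps\to0$ to obtain $E^{P'}[\br{q,Y}]\ge0$ for every $P'\in\cP$, and then passes from $\Int K^{*}_{1}$-valued to $K^{*}_{1}$-valued $Y$ by dominated convergence, you fix a single offending $P'$, test against the single mixture $\tfrac12(P+P')$, and handle the open-cone constraint by the explicit perturbation $Y_{0}+\eps W$ of the closed-cone minimizer; this collapses the limiting and dominated-convergence steps into one direct contradiction. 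Both routes use convexity of $\cP$ (legitimate here since $T=1$ and $\cP_{0}(\omega)$ is assumed convex) to keep the mixtures dominated by $\cP$. Two cosmetic points: in your convexity argument the identity $\lambda\tfrac{dR_{1}}{dR}+(1-\lambda)\tfrac{dR_{2}}{dR}=1$ and hence the membership $Y\in\Int K^{*}_{1}$ hold only $R$-a.s., so one should modify $Y$ on an $R$-null set (the paper does this explicitly); and your selectors are universally rather than Borel measurable, which is fine because $\cF_{1}$ is the universal completion of $\cB(\Omega_{1})$.
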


The following lemma (which is trivial when $\cP$ is a singleton)  will be used in the proof.

\begin{lemma}\label{le:ThetaP}
  For every $P\in\cP$, the set 
  $$
    \Theta_{P}:=   \big\{ E^{R}[Y]:\; P \ll  R\lll \cP,  \; Y\in L^{\infty}(\cF_{1};\Int K^{*}_{1}) \big\} \subset \R^{d}
  $$
  is convex and has nonempty interior.
\end{lemma}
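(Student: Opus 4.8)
The plan is to prove the two assertions separately: convexity of $\Theta_P$ is elementary, while nonemptiness of the interior is the substantive point. For convexity, I would take two elements $E^{R_0}[Y_0]$ and $E^{R_1}[Y_1]$ of $\Theta_P$ and a weight $\lambda\in(0,1)$, and exhibit a single pair $(R,Y)$ realizing the convex combination. The natural candidate is a mixture $R:=\lambda' R_0+(1-\lambda')R_1$ for a suitable $\lambda'$; since $P\ll R_0$ and $P\ll R_1$ we still have $P\ll R$, and $R\lll\cP$ is preserved because each $R_i\ll$ some element of $\cP$ and a finite mixture is dominated by a single element of $\cP$ (using convexity of each $\cP_t(\omega)$, which guarantees $\cP$ is closed under the relevant mixtures, or simply noting $R\ll\tfrac12(S_0+S_1)$ with $S_i\in\cP$, and then checking $\tfrac12(S_0+S_1)\lll\cP$). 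On this common $R$, I would then write $\mathbb{E}^R[Y_i]$ using Radon--Nikodym densities and select a single $Y\in L^\infty(\cF_1;\Int K^*_1)$ whose $R$-expectation is the desired convex combination; because $\Int K^*_1$ is a convex cone, a pointwise convex combination of bounded selectors of $\Int K^*_1$ is again a bounded selector of $\Int K^*_1$, so this is routine once the measures are aligned.

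The heart of the lemma is showing $\Int\Theta_P\ne\emptyset$, equivalently that $\Theta_P$ is not contained in any hyperplane, i.e.\ that it spans $\R^d$. The plan is to argue by contradiction: if $\Theta_P$ lies in a proper subspace, there is a nonzero $\zeta\in\R^d$ with $\br{\zeta,v}=0$ for all $v\in\Theta_P$, and I would try to upgrade this into a violation of $\NA2(\cP)$. Taking $Y\in L^\infty(\cF_1;\Int K^*_1)$ and $R=P$ (legitimate since $P\ll P\lll\cP$) already shows $\br{\zeta,E^P[Y]}=0$ for every bounded selector $Y$ of $\Int K^*_1$; by varying $Y$ and using that $Y$ ranges over all bounded $\Int K^*_1$-valued functions, this forces $\br{\zeta,y}=0$ for $P$-a.e.\ $\omega$ and every $y\in\Int K^*_1(\omega)$, hence $\zeta\perp K^*_1(\omega)$ $P$-a.s. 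Because $K_1(\omega)$ is the bipolar of $K^*_1(\omega)$ (a closed convex cone containing $\R^d_+$), the relation $\br{\zeta,y}=0$ on $K^*_1$ means both $\zeta$ and $-\zeta$ lie in $K_1(\omega)$, i.e.\ $\pm\zeta\in K_1$ $P$-a.s.

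Having produced a \emph{constant} (hence $\cF_0$-measurable) $\zeta$ with $\zeta\in K_1$ and $-\zeta\in K_1$ $P$-a.s.\ (and I would extend this to $\cP$-q.s.\ by replacing $P$ with the full family, or by noting the argument runs for each $P$ and taking the relevant selector), I would invoke $\NA2(\cP)$ to conclude $\zeta\in K_0$ and $-\zeta\in K_0$. But $K_0$ is a cone whose dual $K^*_0$ has nonempty interior by \eqref{eq:assIntKstar}, so $K_0$ is pointed in the sense that $\zeta\in K_0$ and $-\zeta\in K_0$ force $\zeta=0$ (pick $y\in\Int K^*_0$; then $\br{y,\zeta}\ge0$ and $\br{y,-\zeta}\ge0$ give $\br{y,\zeta}=0$, and since $y$ ranges over an open set this yields $\zeta=0$). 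This contradicts $\zeta\ne0$, so $\Theta_P$ spans $\R^d$; combined with convexity, a full-dimensional convex set has nonempty interior. The main obstacle I anticipate is the measure-theoretic bookkeeping in the contradiction step---specifically, deducing the pointwise annihilation $\zeta\perp K^*_1(\omega)$ from the scalar identities $\br{\zeta,E^P[Y]}=0$, which requires carefully varying $Y$ over a rich enough family of bounded selectors (localizing on Borel sets where $\br{\zeta,\cdot}$ has a definite sign on $\Int K^*_1$), together with the efficient-friction input \eqref{eq:assIntKstar} that makes $K_0$ and $K_1$ genuinely pointed.
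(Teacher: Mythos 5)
Your convexity argument follows essentially the paper's own route (pass to the mixture $R=(R_{1}+R_{2})/2$, which still satisfies $P\ll R\lll\cP$ by convexity of $\cP_{0}$, and recombine using the densities $H_{i}=dR_{i}/dR$). One caveat you gloss over: after the change of measure, the functions $H_{i}Y_{i}$ are \emph{not} selectors of $\Int K^{*}_{1}$, since each density may vanish on a set of positive $R$-measure; the combination $\alpha H_{1}Y_{1}+(1-\alpha)H_{2}Y_{2}$ lies in $\Int K^{*}_{1}$ only $R$-a.s.\ (a.s.\ at least one $H_{i}$ is positive and $\Int K^{*}_{1}+K^{*}_{1}\subset \Int K^{*}_{1}$), so it must be modified on an $R$-nullset. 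That is cosmetic.

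The genuine problem is the interior part: your argument invokes $\NA2(\cP)$, but the lemma has no such hypothesis --- it is an unconditional statement, and the paper proves it with no arbitrage assumption whatsoever. The paper's proof is direct: fix a bounded Borel selector $Y$ of $\Int K^{*}_{1}$ (Remark~\ref{rk:intSpaceNonempty}, which needs only efficient friction~\eqref{eq:assIntKstar}); since $\graph(\Int K^{*}_{1})$ is Borel (Lemma~\ref{le:randomSets}), Jankov--von Neumann selection (Lemma~\ref{le:JvNselectionThm}) produces $(0,1)$-valued random variables $\eps_{i}$ with $Y\pm\eps_{i}e_{i}$ still bounded selectors of $\Int K^{*}_{1}$; then the $2d$ points $E^{R}[Y\pm\eps_{i}e_{i}]=E^{R}[Y]\pm E^{R}[\eps_{i}]e_{i}$ lie in $\Theta_{P}$ with $E^{R}[\eps_{i}]>0$, and the already-established convexity gives $E^{R}[Y]\in\Int\Theta_{P}$. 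In other words, nonemptiness of the interior is a geometric consequence of the openness of $\Int K^{*}_{1}$, not a no-arbitrage phenomenon; your route imports into this lemma precisely the separation/no-arbitrage mechanism that the paper deliberately keeps inside Proposition~\ref{pr:oneStep}. Because the lemma happens to be applied only under $\NA2(\cP)$, your conditional version would not destroy the main theorem, but it does not prove the statement as written. Even for the conditional version, three repairs would be needed: (i) empty interior of a convex set only yields an \emph{affine} hyperplane $\br{\zeta,\cdot}\equiv c$ containing $\Theta_{P}$; you need the cone property of $\Theta_{P}$ (inherited from the cone $\Int K^{*}_{1}$) to force $c=0$; (ii) in the localization step you cannot put $Y=0$ off the set $A=\{\omega:\exists\,y\in\Int K^{*}_{1}(\omega),\ \br{\zeta,y}>0\}$, since $0\notin\Int K^{*}_{1}$; instead take a base selector $Y_{0}$ and consider $Y_{0}+n Y_{+}\1_{A}$, letting $n\to\infty$; (iii) to upgrade $\pm\zeta\in K_{1}$ from $P$-a.s.\ to $\cP$-q.s.\ (which $\NA2(\cP)$ requires) you must use the mixtures $\eps P+(1-\eps)P'$, $P'\in\cP$, which are admissible measures $R$ in the definition of $\Theta_{P}$; your alternative suggestion of ``running the argument for each $P$'' fails, because the contradiction hypothesis (empty interior) concerns $\Theta_{P}$ for your fixed $P$ only and says nothing about $\Theta_{P'}$.
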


\begin{proof} 
  Let $P\in\cP$. To see the convexity, let  $\alpha\in(0,1)$, $P\ll  R_{i} \lll \cP$ and  $Y_{i}\in L^{\infty}(\cF_{1};\Int K^{*}_{1})$ for $i=1,2$; we show that $\alpha E^{R_{1}}[Y_{1}]+ (1-\alpha)E^{R_{2}}[Y_{2}]$ is in $\Theta_{P}$. Indeed, set $R:=(R_{1}+R_{2})/2$; then $R\lll \cP$ and the densities $H_{i}:= dR_{i}/dR$ satisfy $0\leq H_{i}\leq2$, so that the function 
  $$
    Y:=\alpha  H_{1}Y_{1}+(1-\alpha) H_{2}Y_{2}
  $$
  is bounded. Since $K^{*}_{1}$ is a convex cone, we also have
  $$
    2R\{Y\in \Int K^{*}_{1}\}\ge \sum_{i=1}^{2}R_{i}\{H_{i}Y_{i}\in \Int K^{*}_{1}\} \ge \sum_{i=1}^{2}R_{i}\{H_{i}>0\}=2.
  $$
  After changing $Y$ on an $R$-nullset if necessary, we obtain $Y\in L^{\infty}(\cF_{1};\Int K^{*}_{1})$ and 
  $$
    \alpha E^{R_{1}}[Y_{1}]+ (1-\alpha)E^{R_{2}}[Y_{2}]=E^{R}[Y]\in\Theta_{P}.
  $$
  
  It remains to show that $\Int\Theta_{P}\neq\emptyset$. Let $P \ll  R\lll \cP$ and let $Y\in L^{\infty}(\cF_{1};\Int K^{*}_{1})$ be Borel-measurable; cf.\ Remark~\ref{rk:intSpaceNonempty}. Moreover, let $(e_{i})_{i\le d}$ be a basis of $\R^{d}$. Using Lemma~\ref{le:randomSets}, we see that the set
  $$
  \{(\omega,\eps)\in \Omega_{1}\times (0,1): Y(\omega)\pm\eps e_{i} \in \Int K^{*}_{1}(\omega)\}
  $$
  is Borel for each $i\le d$. Applying Lemma~\ref{le:JvNselectionThm}, we then find universally measurable, $(0,1)$-valued random variables $\epsilon_{i}$ such that $Y\pm\epsilon_{i} e_{i} \in L^{\infty}(\cF_{1};\Int K^{*}_{1})$ for all $i\leq d$.
  We observe that $E^{R}[Y]$ is in the interior of the convex hull of the points $E^{R}[Y\pm\epsilon_{i} e_{i}]$ and thus $E^{R}[Y]\in\Int\Theta_{P}$.
\end{proof}
 
\begin{proof}[Proof of Proposition~\ref{pr:oneStep}] We argue by separation, similarly as~\cite{Rasonyi.09}; in fact, the present argument even seems to be slightly simpler. 
  The claim can be rephrased as $\Int K^{*}_{0}\subset \Theta_{P}$ for all $P\in \cP$. Let $P\in\cP$ and suppose for contradiction that $\Int K^{*}_{0}\not\subset \Theta_{P}$. Let $\bar \Theta_{P}$ denote the closure of $\Theta_{P}$ in $\R^{d}$. By Lemma~\ref{le:ThetaP}, the interior of $\bar \Theta_{P}$ is contained in $\Theta_{P}$ and it follows that $K^{*}_{0}\not\subset \bar\Theta_{P}$. Let $\gamma \in K^{*}_{0}\setminus \bar \Theta_{P}$, then the separating hyperplane theorem and the cone property yield $q\in \R^{d}$ such that 
  \begin{equation}\label{eq:oneStepContra}
   \br{q,\gamma}<0\le \br{q, E^{R}[Y]}=E^{R}[\br{q,Y}]
  \end{equation}
  for all $P\ll R\lll \cP$ and $Y\in L^{\infty}(\cF_{1};\Int K^{*}_{1})$. Let $P'\in\cP$ and $\eps>0$; then this applies in particular to $R:=\eps P +(1-\eps)P'$ and letting $\eps\to0$ yields that
  $$
   0\le E^{P'}[\br{q,Y}],\quad P'\in \cP,\quad Y\in L^{\infty}(\cF_{1};\Int K^{*}_{1}).
  $$ 
  Using dominated convergence, we can strengthen this to
  $$
   0\le E^{P'}[\br{q,Y}],\quad P'\in \cP,\quad Y\in L^{\infty}(\cF_{1};K^{*}_{1}).
  $$
  Applying measurable selection (Lemma~\ref{le:JvNselectionThm}) and Remark~\ref{rk:intSpaceNonempty}, we can find $Y\in L^{\infty}(\cF_{1}; K^{*}_{t})$ such that $\br{q,Y}<0$ on $\{q\notin K_{1}\}$ and $Y=0$ on $\{q\in K_{1}\}$, so it follows that $\{q\notin K_{1}\}$ is $\cP$-polar. By $\NA2(\cP)$, this implies that $q\in K_{0}$, which contradicts the left-hand side of~\eqref{eq:oneStepContra} as $\gamma \in K^{*}_{0}$.
\end{proof}  

\subsection{The Multi-Period Case}

The basic idea in this section is to use the one-period case as a building block for the multi-period case. In the first part of this section, we formulate a local version of  $\NA2(\cP)$ and show that it is implied by $\NA2(\cP)$, up to a polar set. We fix $t\in \{0,\ldots,T-1\}$ until further notice.

\begin{definition}\label{eq: def NA2 local}
  Given $\omega\in\Omega_{t}$, we say that  $\NA2(t,\omega)$ holds if 
  $$
    \zeta\in K_{t+1}(\omega,\cdot) \;\;\cP_{t}(\omega)\qs \quad \mbox{implies} \quad \zeta \in K_{t}(\omega) \quad\mbox{for all $\zeta\in \R^{d}$. }
 $$
\end{definition}

To obtain an analytically tractable description of $\NA2(t,\omega)$, we first introduce a notion of support for the cone $K_{t+1}(\omega,\cdot)$.

\begin{lemma}\label{le:LambdaDef}
	Let $\omega \in \Omega_{t}$. The set 
	$$
	\Lambda_{t}(\omega):=\big\{x\in \R^{d}: x\in K_{t+1}(\omega,\cdot)\;\cP_{t}(\omega)\qs\big\}
	$$
	is a closed convex cone containing the origin. It has the following maximality property: if $A\subset\R^{d}$ is a closed set such that $A\subset K_{t+1}(\omega,\cdot)$ $\cP_{t}(\omega)\qs$, then $A\subset \Lambda_{t}(\omega)$.
\end{lemma}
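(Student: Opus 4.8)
The plan is to check each assertion fiberwise, using repeatedly that the $\cP_{t}(\omega)$-polar subsets of $\Omega_{1}$ form a $\sigma$-ideal: they are stable under passage to subsets and, since for each fixed $P\in\cP_{t}(\omega)$ the $P$-null sets are closed under countable unions, under countable unions as well. Throughout I fix $\omega$ and write $K:=K_{t+1}(\omega,\cdot)$; each fiber $K(\omega')$ is a closed convex cone containing $\R^{d}_{+}$, and by Lemma~\ref{le:randomSets} the set $\{\omega':\,x\in K(\omega')\}$ is measurable for each fixed $x$, so the defining condition ``$x\in K$ $\cP_{t}(\omega)$-q.s.'' refers to a genuine measurable polar exceptional set.

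First I would observe that $0\in\Lambda_{t}(\omega)$ because $0\in\R^{d}_{+}\subset K(\omega')$ for all $\omega'$. For the cone property, if $x\in\Lambda_{t}(\omega)$ and $\lambda\ge0$, then off the polar set $\{x\notin K\}$ the fibers give $\lambda x\in K(\omega')$, so $\lambda x\in\Lambda_{t}(\omega)$. Convexity is the same type of argument: for $x,y\in\Lambda_{t}(\omega)$ and $\alpha\in[0,1]$ one has $\{\alpha x+(1-\alpha)y\notin K\}\subset\{x\notin K\}\cup\{y\notin K\}$ by convexity of the fibers, and the right-hand side is polar as a union of two polar sets; hence $\alpha x+(1-\alpha)y\in\Lambda_{t}(\omega)$.

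The main point is closedness, and here the countable stability of polar sets is essential. Given $x_{n}\in\Lambda_{t}(\omega)$ with $x_{n}\to x$, I would set $N:=\bigcup_{n}\{\omega':\,x_{n}\notin K(\omega')\}$; this is a countable union of polar sets, hence polar. Off $N$ every $x_{n}$ lies in the closed fiber $K(\omega')$, so the limit $x$ lies there too, whence $\{x\notin K\}\subset N$ is polar and $x\in\Lambda_{t}(\omega)$. I expect this countable-exhaustion step to be the only genuinely delicate one; the remaining properties reduce to elementary fiberwise reasoning.

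For the maximality property I read ``$A\subset K$ $\cP_{t}(\omega)$-q.s.'' as the assertion that $\{\omega':\,A\not\subset K(\omega')\}$ is polar; this is meaningful since, taking a countable dense subset $D$ of the separable set $A$ and using closedness of the fibers, $\{A\subset K\}=\bigcap_{x\in D}\{x\in K\}$ is measurable. Granting it, the conclusion is immediate: for each $x\in A$ one has $\{x\notin K\}\subset\{A\not\subset K\}$, so $\{x\notin K\}$ is polar, i.e.\ $x\in\Lambda_{t}(\omega)$; since $x\in A$ is arbitrary, $A\subset\Lambda_{t}(\omega)$. Finally, to confirm that $\Lambda_{t}(\omega)$ is genuinely the \emph{largest} such closed set, I would note by the same density trick that $\Lambda_{t}(\omega)$ itself satisfies $\Lambda_{t}(\omega)\subset K$ $\cP_{t}(\omega)$-q.s.: for a countable dense $D\subset\Lambda_{t}(\omega)$ the set $\bigcup_{x\in D}\{x\notin K\}$ is polar, and off it the closed fiber $K(\omega')$ contains $D$ and hence $\overline{D}=\Lambda_{t}(\omega)$.
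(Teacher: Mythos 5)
Your proof is correct, but it takes a genuinely different route from the paper's. The paper does not check the properties of $\Lambda_{t}(\omega)$ one by one; instead it introduces an auxiliary set $\Lambda_{t}'=\bigcap_{k}C_{k}$, where $C_{k}$ is the union of all members $O_{n}$ of a countable base of open sets satisfying $O_{n}\subset K_{t+1}(\omega,\cdot)+B_{1/k}$ $\cP_{t}(\omega)$-q.s. Countability of the base yields at once that $\Lambda_{t}'$ is contained in $K_{t+1}(\omega,\cdot)$ off a \emph{single} polar set, and the $1/k$-fattening yields the maximality property for $\Lambda_{t}'$ (a closed $A$ with $A\subset K_{t+1}$ q.s.\ satisfies $A+B_{1/k}\subset C_{k}$, hence $A\subset\Lambda_{t}'$); the identity $\Lambda_{t}=\Lambda_{t}'$ follows by applying maximality to singletons, and the closed-convex-cone structure is then inherited from $K_{t+1}$ via maximality, since the closed convex conic hull of $\Lambda_{t}$ is still contained in $K_{t+1}$ q.s.\ off the same polar set. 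You replace this construction by direct fiberwise verification: finite unions of polar sets for the cone and convexity properties, a countable union along a convergent sequence for closedness, and the countable-dense-subset trick to recover the uniform containment $\Lambda_{t}(\omega)\subset K_{t+1}(\omega,\cdot)$ q.s., which is exactly what the paper's open-base construction delivers by design. Both arguments rest on the same two pillars---the $\sigma$-ideal property of the $\cP_{t}(\omega)$-polar sets and closedness of the fibers of $K_{t+1}$, with separability of $\R^{d}$ supplying the needed countability---but yours is the more elementary and transparent organization, each property traced to its cause, while the paper's is more compact in that maximality is established once and then does all the remaining work. Your reading of the hypothesis ``$A\subset K_{t+1}(\omega,\cdot)$ q.s.'' as involving a single exceptional polar set agrees with the paper's usage, and your closing observation that $\Lambda_{t}(\omega)$ itself satisfies this containment (so it is a maximum, not merely an upper bound) is the right complement to the stated maximality.
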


\begin{proof}
  We fix $\omega$ and omit it in the notation; i.e., we write $\cP_{t}$ for $\cP_{t}(\omega)$, $K_{t+1}$ for $K_{t+1}(\omega,\cdot)$, etc. Moreover, we denote by $B_{\eps}$ the open ball of radius $\eps$ in $\R^{d}$; thus, $C+B_{\eps}$ is the open $\eps$-neighborhood of a set $C\subset \R^{d}$. 
  
  Let $(O_{n})_{n\ge 1}$ be a countable basis of open sets for the topology of $\R^{d}$. We introduce
  $$
	\Lambda_{t}'   :=\bigcap_{k\ge 1}C_{k},\quad \mbox{where} \quad C_{k} :=\bigcup \left\{O_{n}:\, O_{n}\,\subset\, K_{t+1} +B_{1/k}\;\;\cP_{t}\qs\right\}.
	$$
	Since the union is countable, $\Lambda_{t}' \subset K_{t+1} +B_{1/k}$ holds $\cP_{t}\qs$ for all $k\ge1$. As  $K_{t+1}$ is closed, it follows that $\Lambda_{t}'\subset K_{t+1}$ $\cP_{t}\qs$ 
	
	Let $A\subset\R^{d}$ be a closed set such that $A\subset K_{t+1}$ $\cP_{t}\qs$. Then, for all $k\ge 1$, we have $A+B_{1/k}\subset K_{t+1}+B_{1/k}$ $\cP_{t}\qs$, so that the open set $A+B_{1/k}$ is a subset of $C_{k}$. In particular, $A\subset \cap_{k} C_{k}=\Lambda_{t}'$ and we have shown the maximality property for $\Lambda_{t}'$.
	
	Since $\Lambda_{t}'\subset K_{t+1}$ $\cP_{t}\qs$, we have in particular that any $x\in\Lambda_{t}'$ is contained in $\Lambda_{t}$. Conversely, any $x\in\Lambda_{t}$ forms a closed set which is contained in $K_{t+1}$ $\cP_{t}\qs$ and thus $x\in\Lambda_{t}'$ by the maximality. As a result, we have $\Lambda_{t}=\Lambda_{t}'$. That $\Lambda_{t}$ is a closed convex cone containing the origin follows from the fact that $K_{t+1}$ has the same properties.
\end{proof} 

\begin{lemma}\label{le:LambdaMbl}
  The random set $\Lambda_{t}$ is $\cF_{t}$-measurable. 
\end{lemma}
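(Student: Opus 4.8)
The plan is to deduce the $\cF_{t}$-measurability of $\Lambda_{t}$ from the representation $\Lambda_{t}=\Lambda_{t}'=\bigcap_{k\ge1}C_{k}$ established in Lemma~\ref{le:LambdaDef}. Since each $C_{k}(\omega)$ is the union of those basis sets $O_{n}$ for which a quasi-sure inclusion holds, it suffices to show that for every pair $(n,k)$ the set
$$
E_{n,k}:=\big\{\omega\in\Omega_{t}:\,O_{n}\subset K_{t+1}(\omega,\cdot)+B_{1/k}\;\;\cP_{t}(\omega)\qs\big\}
$$
belongs to $\cF_{t}$. Granting this, the graph of $C_{k}$ equals $\bigcup_{n}\big(E_{n,k}\times O_{n}\big)$, a countable union of measurable rectangles, so $\graph(\Lambda_{t})=\bigcap_{k}\graph(C_{k})$ lies in $\cF_{t}\otimes\cB(\R^{d})$. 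Because $\Lambda_{t}$ is closed-valued and $\cF_{t}$ is complete (being the universal completion of the Borel $\sigma$-field), the measurable projection theorem upgrades this graph-measurability to measurability of the random set: for any open $O\subset\R^{d}$, the set $\{\omega:\Lambda_{t}(\omega)\cap O\ne\emptyset\}=\proj_{\Omega_{t}}\big(\graph(\Lambda_{t})\cap(\Omega_{t}\times O)\big)$ is analytic, hence in $\cF_{t}$.

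It remains to prove $E_{n,k}\in\cF_{t}$, and here the probabilistic structure enters. First I would record, using Lemma~\ref{le:randomSets}, that the violation set
$$
V_{n,k}:=\big\{(\omega,\omega')\in\Omega_{t}\times\Omega_{1}:\,O_{n}\not\subset K_{t+1}(\omega,\omega')+B_{1/k}\big\}
$$
is Borel in $\Omega_{t+1}$; this uses that the distance functions $(\omega,\omega')\mapsto d\big(x,K_{t+1}(\omega,\omega')\big)$ are Borel and that $F:=\R^{d}\setminus(K_{t+1}+B_{1/k})$ is a Borel-measurable random closed set, so that $V_{n,k}=\{F\cap O_{n}\ne\emptyset\}$ is Borel. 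Now the inclusion $O_{n}\subset K_{t+1}(\omega,\cdot)+B_{1/k}$ holds $\cP_{t}(\omega)\qs$ exactly when the section $V_{n,k}^{\omega}\subset\Omega_{1}$ is $\cP_{t}(\omega)$-polar, i.e.\ $P(V_{n,k}^{\omega})=0$ for every $P\in\cP_{t}(\omega)$. Since $V_{n,k}$ is Borel, the map $(\omega,P)\mapsto P(V_{n,k}^{\omega})=\int_{\Omega_{1}}\1_{V_{n,k}}(\omega,\omega')\,P(d\omega')$ is Borel on $\Omega_{t}\times\fP(\Omega_{1})$ (cf.\ \cite{BertsekasShreve.78}). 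Intersecting the Borel set $\{(\omega,P):P(V_{n,k}^{\omega})>0\}$ with the analytic set $\graph(\cP_{t})$ yields an analytic set whose projection
$$
\big\{\omega\in\Omega_{t}:\,\exists\,P\in\cP_{t}(\omega)\text{ with }P(V_{n,k}^{\omega})>0\big\}
$$
onto $\Omega_{t}$ is analytic, hence universally measurable. Its complement is precisely $E_{n,k}$, so $E_{n,k}\in\cF_{t}$.

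The main obstacle is the second paragraph: transferring the quasi-sure (polarity) condition into a measurable condition on $\omega$. This rests essentially on the hypothesis that $\graph(\cP_{t})$ is analytic, together with the measurability of $P\mapsto P(\cdot)$ and the stability of analytic sets under projection. A secondary technical point, which I would handle with the machinery of Lemma~\ref{le:randomSets}, is to justify that $V_{n,k}$ is genuinely Borel despite the \emph{open} neighborhood $B_{1/k}$: the naive identity $V_{n,k}=\{\sup_{x\in O_{n}}d(x,K_{t+1})\ge1/k\}$ fails on the boundary set $\{\sup=1/k\}$, so one should instead view $V_{n,k}$ as the hitting event $\{F\cap O_{n}\ne\emptyset\}$ of the Borel-measurable random closed set $F$, for which the measurability theory of random closed sets disposes of these boundary subtleties.
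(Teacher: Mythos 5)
Your proof is correct, and it shares its measure-theoretic core with the paper's: both reduce the quasi-sure condition to countably many co-analytic sets, each obtained by intersecting a Borel set of pairs $(\omega,P)$ with the analytic set $\graph(\cP_{t})$, projecting onto $\Omega_{t}$, and taking complements. But the surrounding decomposition is genuinely different. The paper proves weak measurability directly: since $\Lambda_{t}$ is a closed convex cone containing $\R^{d}_{+}$, it is the closure of its interior, so $\{\Lambda_{t}\cap O\neq\emptyset\}$ is a countable union of point-membership sets $\{x_{n}\in\Lambda_{t}\}$ over a dense sequence $(x_{n})\subset O$, and each $\{x\notin\Lambda_{t}\}$ is analytic because the relevant fiber set $\{x\notin K_{t+1}(\omega,\cdot)\}$ is Borel by Lemma~\ref{le:randomSets}. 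You instead exploit the representation $\Lambda_{t}=\bigcap_{k}C_{k}$ from Lemma~\ref{le:LambdaDef}, prove $E_{n,k}\in\cF_{t}$, and assemble $\graph(\Lambda_{t})\in\cF_{t}\otimes\cB(\R^{d})$. What your route buys: you never need the geometric fact that $\Lambda_{t}$ has nonempty interior (which the paper gets from $\R^{d}_{+}\subset K_{t+1}$). What it costs: (a) you must show that the inclusion-violation sets $V_{n,k}$ are Borel, a point the paper's pointwise reduction avoids entirely; you correctly identify the boundary subtlety in the naive sup-formula, and your fix---treating $V_{n,k}$ as the hitting event of the closed random set $F$, which amounts to exhausting $O_{n}$ by compacta on which the sup of the Carath\'eodory distance function is attained---is sound; and (b) you must upgrade graph measurability to measurability of the random set, which requires universal completeness of $\cF_{t}$. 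On (b), one terminological correction: the projection of a set in $\cF_{t}\otimes\cB(\R^{d})$ is not ``analytic'' in the paper's sense (analytic sets are projections of \emph{Borel} sets); it lies in $\cF_{t}$ by the measurable projection theorem for universally complete $\sigma$-fields, which is precisely the implication (v)$\Rightarrow$(i) of Lemma~\ref{le:randomSets} and could be cited verbatim instead.
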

  
\begin{proof}
  Since analytic sets are universally measurable, it suffices to show that the set $\{\Lambda_{t}\cap O\neq\emptyset\}$ is co-analytic whenever $O\subset \R^{d}$ is open. Let $(x_{n})_{n\geq1}$ be dense in $O$, then
  $$
    \{\Lambda_{t}\cap O\neq\emptyset\}=\bigcap_{n\geq1} \{x_{n}\in\Lambda_{t}\}
  $$
  since $\Lambda_{t}$ is the closure of its interior (by Lemma~\ref{le:LambdaDef} and the fact that $\R^{d}_{+}\subset K_{t+1}$). Hence, it suffices to show that $\{x\notin \Lambda_{t}\}$ is analytic for any given point $x\in\R^{d}$. 
  Indeed, by the definition of $\Lambda_{t}$, we have the identity
  \begin{align*}
    \{x\notin \Lambda_{t}\} = \big\{\omega\in\Omega_{t}:\,\exists\, P\in \cP_{t}(\omega)\mbox{ such that } P\{x\notin K_{t+1}(\omega,\cdot)\}>0\big\}.
  \end{align*}
  Lemma~\ref{le:randomSets} and a fact about Borel kernels (cf.\ \cite[Step 1, Proof of Theorem~2.3]{NutzVanHandel.12}) imply that $(P,\omega)\mapsto P\{x\notin K_{t+1}(\omega,\cdot)\}$ is Borel. Using also that the graph of $\cP_{t}$ is analytic, the above identity shows that $\{x\notin \Lambda_{t}\}$ is the projection of an analytic set and thus analytic.
\end{proof}

\begin{lemma}\label{le:NAFailPolar} 
	The set 
	$
	N_{t}:=\{\omega: \NA2(t,\omega)\;\mbox{fails}\}
	$
	is universally measurable. If $\NA2(\cP)$ holds, then $N_{t}$ is $\cP$-polar. 
\end{lemma}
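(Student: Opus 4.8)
The plan is to first recast the pointwise condition $\NA2(t,\omega)$ as an inclusion of cones, then read off universal measurability of $N_{t}$ from this reformulation, and finally obtain $\cP$-polarity by a Fubini argument that produces a single $\cF_{t}$-measurable position violating $\NA2(\cP)$ whenever $N_{t}$ is non-polar.

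\emph{Reformulation and measurability.} By the definition of $\Lambda_{t}$ in Lemma~\ref{le:LambdaDef}, the statement $\zeta\in K_{t+1}(\omega,\cdot)$ $\cP_{t}(\omega)\qs$ is equivalent to $\zeta\in\Lambda_{t}(\omega)$; hence $\NA2(t,\omega)$ (Definition~\ref{eq: def NA2 local}) holds if and only if $\Lambda_{t}(\omega)\subseteq K_{t}(\omega)$, so that $N_{t}=\{\omega:\Lambda_{t}(\omega)\not\subseteq K_{t}(\omega)\}$. Since $\R^{d}_{+}\subset K_{t+1}$ forces $\R^{d}_{+}\subset\Lambda_{t}$, the cone $\Lambda_{t}(\omega)$ is the closure of its interior, while $K_{t}(\omega)$ is closed; therefore $\Lambda_{t}(\omega)\not\subseteq K_{t}(\omega)$ is equivalent to the open set $\Int\Lambda_{t}(\omega)\cap(\R^{d}\setminus K_{t}(\omega))$ being nonempty. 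Fixing a countable dense set $(x_{n})_{n\ge1}$ in $\R^{d}$, this open set is nonempty if and only if it contains some $x_{n}$, whence $N_{t}=\bigcup_{n\ge1}\big(\{x_{n}\in\Int\Lambda_{t}\}\cap\{x_{n}\notin K_{t}\}\big)$. Here $\{x_{n}\notin K_{t}\}$ is Borel because $K_{t}$ is a Borel random set, and $\{x_{n}\in\Int\Lambda_{t}\}$ is universally measurable: membership in the interior is expressed through countable unions (over rational radii) and intersections (over dense points) of the sets $\{x_{m}\in\Lambda_{t}\}$, each of which is $\cF_{t}$-measurable by Lemma~\ref{le:LambdaMbl}. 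Thus $N_{t}$ is a countable union of universally measurable sets, hence universally measurable.

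\emph{Polarity.} Assume $\NA2(\cP)$ and suppose for contradiction that $N_{t}$ is not $\cP$-polar, i.e.\ $P(N_{t})>0$ for some $P\in\cP$. Disjointifying the decomposition of $N_{t}$ above, set $\tilde N_{t}^{n}:=\big(\{x_{n}\in\Int\Lambda_{t}\}\cap\{x_{n}\notin K_{t}\}\big)\setminus\bigcup_{m<n}\big(\{x_{m}\in\Int\Lambda_{t}\}\cap\{x_{m}\notin K_{t}\}\big)$ and define $\zeta\in L^{0}(\cF_{t};\R^{d})$ by $\zeta=x_{n}$ on $\tilde N_{t}^{n}$ and $\zeta=0$ off $N_{t}$. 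Because $x_{n}\in\Int\Lambda_{t}(\omega)\subset\Lambda_{t}(\omega)$ on $\tilde N_{t}^{n}$ and $0\in\Lambda_{t}(\omega)$ elsewhere, we have $\zeta(\omega)\in\Lambda_{t}(\omega)$ for \emph{every} $\omega$, while $\zeta(\omega)\notin K_{t}(\omega)$ for $\omega\in N_{t}$. Using a countable family of constants on measurable pieces is precisely what lets us avoid a measurable selection from the merely co-analytic graph of $\Lambda_{t}$, which is the step I would otherwise expect to be the main obstacle.

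It remains to verify $\zeta\in K_{t+1}$ $\cP\qs$. Fix $R=R_{0}\otimes\cdots\otimes R_{T-1}\in\cP$. Integrating out the coordinates $\omega_{t+2},\dots,\omega_{T}$ (on which the integrand does not depend) and then $\omega_{t+1}$, Fubini's theorem reduces $R\{\zeta\notin K_{t+1}\}$ to the $R$-integral over $\Omega_{t}$ of $\omega\mapsto R_{t}(\omega)\{\zeta(\omega)\notin K_{t+1}(\omega,\cdot)\}$. Since $R_{t}(\omega)\in\cP_{t}(\omega)$ for every $\omega$ and $\zeta(\omega)\in\Lambda_{t}(\omega)$ means exactly that $\zeta(\omega)\in K_{t+1}(\omega,\cdot)$ holds $\cP_{t}(\omega)\qs$, this inner probability vanishes identically, so $R\{\zeta\notin K_{t+1}\}=0$. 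As $R\in\cP$ was arbitrary, $\zeta\in K_{t+1}$ $\cP\qs$. On the other hand $\{\zeta\notin K_{t}\}\supseteq N_{t}$ is not $\cP$-polar, so $\zeta\in K_{t}$ $\cP\qs$ fails. This contradicts $\NA2(\cP)$ (Definition~\ref{def: NA2}), and we conclude that $N_{t}$ is $\cP$-polar.
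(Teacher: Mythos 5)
Your proof is correct, and it reaches the same structural endpoint as the paper --- the identity $N_{t}=\{\Lambda_{t}\setminus K_{t}\neq\emptyset\}$, a single $\cF_{t}$-measurable $\zeta$ with $\zeta\in\Lambda_{t}$ everywhere and $\zeta\notin K_{t}$ on $N_{t}$, then Fubini plus $\NA2(\cP)$ --- but by a genuinely different technical route. The paper constructs its witness $\zeta$ via the Measurable Maximum Theorem \cite[Theorem~18.19]{AliprantisBorder.06}: it selects, $\cF_{t}$-measurably, a point of $\Lambda_{t}\cap\{|x|\le 1\}$ maximizing the Carath\'eodory function $d(\cdot,K_{t}(\cdot))$, so that measurability of $N_{t}=\{d(\zeta,K_{t})>0\}$ and the polarity witness come out of one selection argument (this is where the full random-set measurability of $\Lambda_{t}$ from Lemma~\ref{le:LambdaMbl} is consumed). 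You instead exploit that $\Lambda_{t}\supset\R^{d}_{+}$ forces $\Lambda_{t}$ to be the closure of its (nonempty) interior, so that $\Lambda_{t}\not\subseteq K_{t}$ is witnessed by a point of a fixed countable dense set lying in the open set $\Int\Lambda_{t}\cap K_{t}^{c}$; this turns $N_{t}$ into a countable union of sets built from $\{x_{m}\in\Lambda_{t}\}\in\cF_{t}$ and $\{x_{n}\in K_{t}\}\in\cB(\Omega_{t})$, and disjointification yields a countably-valued $\zeta$. What your route buys is elementarity: no appeal to a selection theorem for the witness (only the pointwise measurability of $\{x\in\Lambda_{t}\}$ from Lemma~\ref{le:LambdaMbl}), and the Fubini step becomes especially transparent because $\zeta$ is a countable sum of constants, so the inner probability $R_{t}(\omega)\{\zeta(\omega)\notin K_{t+1}(\omega,\cdot)\}$ splits into terms $R_{t}(\omega)\{x_{n}\notin K_{t+1}(\omega,\cdot)\}$ whose measurability is exactly the kernel fact already used in the paper. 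What the paper's route buys is compactness of exposition and reuse of machinery it has set up anyway; it also does not need the (correct, but additional) observation that $\Lambda_{t}=\overline{\Int\Lambda_{t}}$, although the paper invokes that same fact in the proof of Lemma~\ref{le:LambdaMbl}, so your argument stays entirely within the paper's toolkit.
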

 
\begin{proof} 

  We first show that there exists an $\cF_{t}$-measurable selector $\zeta$ of $\Lambda_{t}$ satisfying 
  $$
    \zeta\in \Lambda_{t}\setminus K_{t}\quad\mbox{on}\quad \{\Lambda_{t}\setminus K_{t}\neq\emptyset\}.
  $$
  Indeed, the random set $\Lambda_{t}$ is closed, nonempty and $\cF_{t}$-measurable; cf. Lemmas~\ref{le:LambdaDef} and~\ref{le:LambdaMbl}. Moreover, the distance function $d(\cdot,K_{t}(\cdot))$ is a Carath\'eodory function with respect to $\cF_{t}$ (even with respect to $\cB(\Omega_{t})$) by Lemma~\ref{le:randomSets}. Thus, the Measurable Maximum Theorem, cf.\ \cite[Theorem~18.19]{AliprantisBorder.06}, yields an $\cF_{t}$-measurable selector $\zeta$ of $\Lambda_{t}\cap \{x\in\R^{d}:\,|x|\leq1\}$ such that
  $$
    d(\zeta,K_{t})  = \max_{x\in \Lambda_{t},\,|x|\leq1} d(x,K_{t}).
  $$
  The right-hand side is strictly positive on $\{\Lambda_{t}\setminus K_{t}\neq\emptyset\}$ and we conclude that $\zeta\in\Lambda_{t}\setminus K_{t}$ on $\{\Lambda_{t}\setminus K_{t}\neq\emptyset\}$ as desired.

  Next, we note that
  $$
   N_{t}=\{\Lambda_{t}\setminus K_{t}\ne \emptyset\}.
  $$
	Indeed, let $\omega\in N_{t}$; that is, $\NA2(t,\omega)$ fails. Then, there exists $x \in\R^{d}\setminus K_{t}(\omega)$ such that $x\in K_{t+1}(\omega,\cdot)$ $\cP_{t}(\omega)\qs$ Thus, $x \in \Lambda_{t}(\omega)$ and $\Lambda_{t}(\omega)\setminus K_{t}(\omega)\ne \emptyset$. Conversely, if there exists some $x\in\Lambda_{t}(\omega)\setminus K_{t}(\omega)$, then $x\in K_{t+1}(\omega,\cdot)$ $\cP_{t}(\omega)\qs$ and $x\notin K_{t}(\omega)$, so $\NA2(t,\omega)$ fails. 
  
  In particular, the above shows that $N_{t}=\{\zeta \in K_{t}\}^{c}$ is universally measurable.
 Moreover, since $\zeta$ is a selector of $\Lambda_{t}$, we have $\zeta(\omega)\in K_{t+1}(\omega,\cdot)$ $\cP_{t}(\omega)\qs$ for all $\omega$ and thus  $\zeta\in K_{t+1}$ $\cP$-q.s.\ by Fubini's theorem. If $\NA2(\cP)$ holds, this implies that $\zeta\in K_{t}$ $\cP\qs$ and thus $N_{t}=\{\zeta \in \Lambda_{t}\setminus K_{t}\}$ is $\cP$-polar.
\end{proof}

Recall that we intend to use the one-period case as a building block for the multi-period case; this will require some measurable selection arguments. However, 
the space $L^{0}(\cF_{1};\R^{d})$ is not directly amenable to measurable selection due to the absence of a reference measure. Hence, we first introduce an auxiliary structure; namely, we shall reformulate the relevant quantities in terms of measures, using that a pair $(Q,g)\in \fP(\Omega_{1})\times L^{1}(Q)$ can be identified with a pair of measures $(Q,Q')$ where $dQ'/dQ=g$. Since we deal with multivariate functions, the following notation will be useful: given a vector $Q=(Q^{i})_{i\le d}\in \fP(\Omega)^{d}$ and a sufficiently integrable random vector $g=(g^{1},\dots,g^{d})\in L^{0}(\cF_{1};\R^{d})$, we set
$$
  E^{Q}[g]:=(E^{Q^{i}}[g^{i}])_{i\le d}.
$$
In a similar spirit, we write $xy$ for the componentwise multiplication (\emph{not} the inner product) of $x,y\in \R^{d}$,
$$
  xy:=(x^{i}y^{i})_{i\le d}, \quad x,y\in \R^{d}. 
$$
Moreover, we set
$$
  D(Q):=(dQ^{i}/dQ^{1})_{i\le d},
$$
where $dQ^{i}/dQ^{1}$ is the Radon--Nikodym derivative of the absolutely continuous part of $Q^{i}$ with respect to $Q^{1}$. We note that the first component of $D(Q)$ equals one. Clearly, the vector $Q$ can be recovered from $Q^{1}$ and $D(Q)$ when $Q$ is an element of
$$
  \cM_{a}:=\{Q\in \fP(\Omega_{1})^{d}: Q^{i}\ll Q^{1} \mbox{ for } i\le d\}.
$$

Using Remark~\ref{rk:intSpaceNonempty}, we choose and fix a Borel-measurable selector $S_{t+1}$ of $\Int K^{*}_{t+1}$. In what follows, it is helpful to think of $S_{t+1}$ as a frictionless price; cf.\ Example~\ref{ex:pi}. Recall that the components of $S_{t+1}$ are strictly positive; dividing by $S^{1}_{t+1}$, we may thus assume that 
\begin{equation}\label{eq:Snormalized}
  S^{1}_{t+1}\equiv1;
\end{equation}
this corresponds to a choice of numeraire.
Given $\omega \in \Omega_{t}$, we define
$$
  \cD_{t}(\omega)= \big\{(Q,\alpha)\in \cM_{a}\times (0,\infty)^{d}: \alpha S_{t+1}(\omega,\cdot) D(Q)\in L^{0}_{Q^{1}}(\cF_{1};\Int K^{*}_{t+1}(\omega,\cdot))\big\}. 
$$
This random set is related to our original goal as follows.

\begin{remark}\label{rk:coding}
	Fix $\omega \in \Omega_{t}$ and $z\in \R^{d}$. Using obvious notation, let $(Y,\tilde Q)\in L^{\infty}_{\tilde Q}(\cF_{1};\Int K^{*}_{t+1}(\omega,\cdot))\times \fP(\Omega_{1})$ be such that $E^{\tilde Q}[Y]=z$, and define the pair $(Q(\omega),\alpha(\omega))\in \fP(\Omega_{1})^{d}\times (0,\infty)^{d}$ by 
	\begin{align*}
	G^{i}(\omega,\cdot)&:=Y^{i}/S^{i}_{t+1}(\omega,\cdot),\\
	\alpha^{i}(\omega)&:=E^{\tilde Q}[G^{i}(\omega,\cdot)], \\
	dQ^{i}(\omega,\cdot)/d\tilde Q&:=\frac{G^{i}(\omega,\cdot)}{\alpha^{i}(\omega)}.
	\end{align*}
	These quantities are well-defined. Indeed, $G^{i}>0$; moreover, recalling~\eqref{eq:Snormalized},
	$$
	  G^{i}(\omega,\cdot)=Y^{i}/S^{i}_{t+1}(\omega,\cdot) \leq cY^{1}/S^{1}_{t+1}(\omega,\cdot) = cY^{1}
	$$
	by our assumption~\eqref{eq:KstarBound}, and since $Y^{1}$ is bounded, $G^{i}$ is bounded and in particular integrable\footnote{
As can be seen here, we can dispense with a condition that is slightly weaker than~\eqref{eq:KstarBound}, but aesthetically less pleasant: the existence of a constant $c$ and selectors $S_{t}$ such that $y^{i}/S^{i}_{t+1}(\omega,\cdot) \leq cy^{1}/S^{1}_{t+1}(\omega,\cdot)$ for all $y\in K^{*}_{t+1}(\omega,\cdot)$. 
}. 
We have $\alpha S_{t+1}(\omega,\cdot) D(Q)=\alpha^{1}Y/Y^{1}\in\Int K^{*}_{t}(\omega,\cdot)$ $\tilde Q$-a.s., so that $(Q(\omega),\alpha(\omega))\in \cD_{t}(\omega)$. Moreover, $\alpha(\omega) E^{Q(\omega)}[S_{t+1}(\omega,\cdot)]=z$.
	
	Conversely, if $(Q,\alpha) \in \cD_{t}(\omega)$ and $\alpha E^{Q}[S_{t+1}(\omega,\cdot)]=z$, then one can set 
	\begin{align*}
	  \tilde Q:=Q^{1} \quad \mbox{and} \quad  Y(\cdot):=\alpha S_{t+1}(\omega,\cdot) D(Q)  
	\end{align*}
	to find $(Y,\tilde Q)\in L^{0}_{\tilde Q}(\cF_{1};\Int K^{*}_{t+1}(\omega,\cdot))\times \fP(\Omega_{1})$ with  $E^{\tilde Q}[Y]=z$. 
	%In addition, it follows from the above formula that $Y$ is $\cF_{t+1}$-measurable if it is seen as a function of $\omega$ as well.
\end{remark}  

The next lemma will allow us to select from $\cD_{t}$. We include an additional parameter $R\in\fP(\Omega_{1})$ which will be used to ensure that the consistent price system is dominated by $\cP$.

\begin{lemma}\label{le:Danalytic} 
	Let  $P(\cdot): \Omega_{t}\to \fP(\Omega_{1})$ be a Borel kernel. The set of all 
	$(\omega,z, \alpha,Q,R)\in \Omega_{t}\times \R^{d}\times \R^{d}\times \fP(\Omega_{1})^{d}\times \fP(\Omega_{1})$ such that 
	$$
	\alpha E^{Q}[S_{t+1}(\omega,\cdot)]=z,\quad P(\omega)\ll  Q^{1}\ll R,\quad R\in \cP_{t}(\omega),\quad (Q,\alpha)\in \cD_{t}(\omega)
	$$  
	is analytic. 
\end{lemma}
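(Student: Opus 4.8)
The plan is to exhibit the set in question as a countable intersection of sets, each of which is either Borel or analytic, and then to conclude from the stability of the class of analytic sets under countable intersections (recalling that every Borel set is analytic). Since all of the variables $(\omega,z,\alpha,Q,R)$ occur freely in the defining conditions, no projection is needed; I would simply regard each of the four requirements as a subset of the product space $\Omega_{t}\times\R^{d}\times\R^{d}\times\fP(\Omega_{1})^{d}\times\fP(\Omega_{1})$ (pulling back along coordinate projections, which are Borel) and verify its measurability one requirement at a time.

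First, for the integral constraint $\alpha E^{Q}[S_{t+1}(\omega,\cdot)]=z$, I would use that $S_{t+1}$ is Borel and nonnegative, so that by a standard kernel fact (cf.\ \cite{BertsekasShreve.78}) the map $(\omega,Q^{i})\mapsto \int_{\Omega_{1}}S^{i}_{t+1}(\omega,\cdot)\,dQ^{i}\in[0,\infty]$ is Borel, whence $(\omega,z,\alpha,Q)\mapsto \alpha^{i}\int S^{i}_{t+1}\,dQ^{i}-z^{i}$ is Borel and the constraint set is Borel (the case where an integral is infinite simply fails the finite constraint). For the absolute-continuity requirements I would invoke the Borel-kernel fact from \cite{NutzVanHandel.12} already used in Lemma~\ref{le:LambdaMbl}: the density of the absolutely continuous part, $(\mu,\nu,\omega_{1})\mapsto (d\mu_{a}/d\nu)(\omega_{1})$, is jointly Borel. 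From this, $\{(\mu,\nu):\mu\ll\nu\}$ is Borel, since $\mu\ll\nu$ iff $\int (d\mu_{a}/d\nu)\,d\nu=1$ and the total-mass map is Borel. Applying this to $(P(\omega),Q^{1})$, $(Q^{1},R)$, and $(Q^{i},Q^{1})$ shows that $P(\omega)\ll Q^{1}$, $Q^{1}\ll R$, and $Q\in\cM_{a}$ each define Borel sets, and moreover that the vector $D(Q)=(dQ^{i}/dQ^{1})_{i\le d}$ is jointly Borel in $(\omega_{1},Q)$.

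For the condition $(Q,\alpha)\in\cD_{t}(\omega)$, beyond $Q\in\cM_{a}$ and $\alpha\in(0,\infty)^{d}$ (both already seen to be Borel) it remains to treat the requirement $\alpha S_{t+1}(\omega,\cdot)D(Q)\in\Int K^{*}_{t+1}(\omega,\cdot)$ $Q^{1}$-a.s. By Lemma~\ref{le:randomSets} the set $\{(\omega,\omega_{1},v):v\in\Int K^{*}_{t+1}(\omega,\omega_{1})\}$ is Borel, and combining this with the joint Borel measurability of $S_{t+1}$ and of $D(Q)$, the set
$$
 B:=\big\{(\omega,\omega_{1},\alpha,Q):\ \alpha\,S_{t+1}(\omega,\omega_{1})\,D(Q)(\omega_{1})\notin\Int K^{*}_{t+1}(\omega,\omega_{1})\big\}
$$
is Borel. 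The almost-sure condition then reads $Q^{1}(\{\omega_{1}:(\omega,\omega_{1},\alpha,Q)\in B\})=0$, and the Borel-kernel fact again renders $(\omega,\alpha,Q)\mapsto Q^{1}(\{\omega_{1}:(\omega,\omega_{1},\alpha,Q)\in B\})$ Borel, so this condition also defines a Borel set. Finally, the condition $R\in\cP_{t}(\omega)$ is exactly membership in $\graph(\cP_{t})$ (in the $(\omega,R)$ coordinates), which is analytic by hypothesis. The set in the statement is the intersection of the Borel sets above with this one analytic set, hence analytic.

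The main obstacle is the measurable treatment of the absolutely continuous parts and of the quasi-sure/almost-sure requirements: verifying that the relation $\ll$, the density vector $D(Q)$, and the ``$Q^{1}$-a.s.'' membership are all Borel. This is precisely where the joint Borel measurability of Radon--Nikodym densities from \cite{NutzVanHandel.12} does the work; everything else reduces to elementary stability properties of Borel and analytic sets.
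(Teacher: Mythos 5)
Your proof is correct and takes essentially the same route as the paper: the same decomposition into Borel constraint sets (the integral constraint via Borel kernel integration, the absolute-continuity conditions and $D(Q)$ via jointly Borel Radon--Nikodym densities, and the $Q^{1}$-a.s.\ membership condition via Lemma~\ref{le:randomSets} plus the kernel fact) intersected with the single analytic set coming from $\graph(\cP_{t})$. The only cosmetic differences are that you encode the almost-sure condition directly as ``$Q^{1}$ of the exceptional section is zero'' whereas the paper writes it as $E^{R}\bigl[\tfrac{dQ^{1}}{dR}\1_{A}\bigr]\ge 1$ (equivalent once $Q^{1}\ll R$ is enforced), and your citations for the two measurability facts are swapped relative to the paper's (the density fact is \cite[Lemma 4.7]{BouchardNutz.13}, the kernel-integration fact is \cite{NutzVanHandel.12}), which does not affect the argument.
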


\begin{proof} 
  Let $\Gamma$ be the set of $(\omega, z,\alpha,Q,R)\in  \Xi:= \Omega_{t}\times \R^{d}\times \R^{d} \times \fP(\Omega_{1})^{d}\times \fP(\Omega_{1})$ such that  
  $$
   \alpha E^{Q}[S_{t+1}(\omega,\cdot)]=z,\quad P(\omega)\ll  Q^{1}\ll R,\quad (Q,\alpha)\in \cD_{t}(\omega).
  $$
  Then the set in question is the intersection 
  $$
    \Gamma \cap \{(\omega,z, \alpha,Q,R)\in \Xi : R\in \cP_{t}(\omega)\}.
  $$  
  The second set is analytic because $\graph(\cP_{t})$ is analytic; thus, the lemma will follow if $\Gamma$ is analytic as well.  
  We show that $\Gamma$ is even Borel.  Observe first that the map  
 $$
  (\omega,z,\alpha,Q,R)\in \Xi \mapsto  \alpha E^{Q}[S_{t+1}(\omega,\cdot)]-z
 $$
 is Borel (again, using \cite[Step 1, Proof of Theorem 2.3]{NutzVanHandel.12}), so that
 $$ 
   \Gamma_{1}:=\big\{(\omega, z,\alpha,Q,R)\in \Xi: \alpha E^{Q}[S_{t+1}(\omega,\cdot)]=z \big\}
 $$
is Borel. A classical fact about Radon--Nikodym derivatives on separable spaces (e.g., \cite[Lemma 4.7]{BouchardNutz.13}) shows that $(Q,\tilde\omega)\mapsto D(Q)(\tilde\omega)$ can be chosen to be jointly Borel. Using this result, one can show that the set 
 $$
   \Gamma_{2}:=\big\{(\omega, z,\alpha,Q,R)\in \Xi:  \; Q\in \cM_{a},\; P(\omega)\ll  Q^{1}\ll R\big\}
 $$
 is also Borel; the proof follows similar arguments as \cite[Lemma 4.8]{BouchardNutz.13}. Moreover, 
 $$
  (\omega, \tilde \omega,\alpha,Q)\in \Xi':=\Omega_{t}\times \Omega_{1}\times\R^{d}\times \fP(\Omega_{1})^{d}\mapsto \alpha S_{t+1}(\omega,\tilde \omega)  D(Q)(\tilde \omega)
 $$
 is Borel. Now Lemma~\ref{le:randomSets} implies that  
 $$
A:=\big\{(\omega,\tilde \omega, \alpha,Q)\in \Xi' :  \alpha S_{t+1}(\omega,\tilde \omega)D(Q)(\tilde \omega)\in  \Int K^{*}_{t+1}(\omega,\tilde \omega) \big\}
 $$
 is Borel and then so is
 $$
 \phi:(\omega,z, \alpha,Q,R)\in \Xi \mapsto E^{R}\bigg[\frac{dQ^{1}}{dR} \1_{A}(\omega,\cdot, \alpha,Q)\bigg].
 $$ 
 This shows that 
\begin{align*}
 \Gamma_{3}:=\big\{(\omega,z, \alpha,Q,R)\in \Xi: \phi(\omega,z, \alpha,Q,R)\ge 1 \big\}
\end{align*}
is Borel. As a result, $\Gamma=\Gamma_{1}\cap\Gamma_{2}\cap\Gamma_{3}$ is Borel as claimed.
\end{proof} 

We can now complete the proof of Theorem~\ref{th:FTAP}.

\begin{lemma}
  $\NA2(\cP)$ implies $\PCE(\cP)$.
\end{lemma}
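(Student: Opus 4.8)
The plan is to construct the extension $(Z,Q)$ by a forward recursion on $s=t,t+1,\dots,T-1$, using the one-period result Proposition~\ref{pr:oneStep} as the building block at each state $\omega$ and pasting the resulting transition kernels together by Fubini's theorem. Throughout I carry along a universally measurable value process $Z_s:\Omega_s\to\Int K^{*}_{s}$, started at $Z_t:=Y$ (which I may take Borel after modification on a $P$-null set via \cite[Lemma~7.27]{BertsekasShreve.78}), together with transition kernels $Q^{1}_{s}$ and dominating kernels $R_s\in\cP_s$. The goal of the $s$-th step is to produce $Q^{1}_{s}$, $R_s$ and $Z_{s+1}\in\Int K^{*}_{s+1}$ such that $E^{Q^{1}_{s}(\omega)}[Z_{s+1}(\omega,\cdot)]=Z_s(\omega)$ and $P_s(\omega)\ll Q^{1}_{s}(\omega)\ll R_s(\omega)$, where $P_s$ denotes the $s$-th transition kernel of the fixed model $P$.

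For the single step I fix $s$ and argue pointwise. By Lemma~\ref{le:NAFailPolar} the set $N_s$ on which $\NA2(s,\omega)$ fails is $\cP$-polar, so for $\cP$-q.s.\ $\omega$ the local one-period market---with models $\cP_s(\omega)$ and solvency cones $K_s(\omega)$, $K_{s+1}(\omega,\cdot)$---satisfies the standing assumptions and the one-period no-arbitrage condition. Hence Proposition~\ref{pr:oneStep}, applied to this local market with input $y:=Z_s(\omega)\in\Int K^{*}_{s}(\omega)$ and reference model $P_s(\omega)\in\cP_s(\omega)$, furnishes $P_s(\omega)\ll R\lll\cP_s(\omega)$ and a bounded $Y\in L^{\infty}(\cF_{1};\Int K^{*}_{s+1}(\omega,\cdot))$ with $E^{R}[Y]=Z_s(\omega)$. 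Remark~\ref{rk:coding} re-encodes such a pair as an element $(Q,\alpha)\in\cD_s(\omega)$ with $\alpha E^{Q}[S_{s+1}(\omega,\cdot)]=Z_s(\omega)$ and $Q^{1}=R$, so that the analytic set of Lemma~\ref{le:Danalytic} (with the Borel kernel $P:=P_s$) has a nonempty section over every $(\omega,z)$ for which $\omega\notin N_s$ and $z=Z_s(\omega)$.

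With nonemptiness in hand I apply Jankov--von Neumann (Lemma~\ref{le:JvNselectionThm}) to that analytic set to obtain universally measurable maps $(\omega,z)\mapsto(\alpha,Q,R)$ selecting into the section, and compose with $\omega\mapsto(\omega,Z_s(\omega))$ to get universally measurable $\alpha_s,Q_s,R_s$ on $\Omega_s$ (set to a fixed default on $N_s$). I let $Q^{1}_{s}$ be the first component of $Q_s$ and set $Z_{s+1}:=\alpha_s S_{s+1}D(Q_s)$; by the definition of $\cD_s$ this lies in $\Int K^{*}_{s+1}$ $Q^{1}_{s}$-a.s., is universally measurable since $(Q,\tilde\omega)\mapsto D(Q)(\tilde\omega)$ is jointly Borel (\cite[Lemma~4.7]{BouchardNutz.13}), and may be redefined on a $Q^{1}_{s}$-null set to take values in $\Int K^{*}_{s+1}$ everywhere, keeping the recursion running. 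The computation in Remark~\ref{rk:coding} gives the one-step identity $E^{Q^{1}_{s}(\omega)}[Z_{s+1}(\omega,\cdot)]=\alpha_s E^{Q_s}[S_{s+1}(\omega,\cdot)]=Z_s(\omega)$, and the selection guarantees $P_s\ll Q^{1}_{s}\ll R_s\in\cP_s$. Assembling by Fubini, I put $Q:=P_0\otimes\cdots\otimes P_{t-1}\otimes Q^{1}_{t}\otimes\cdots\otimes Q^{1}_{T-1}$ and $R:=P_0\otimes\cdots\otimes P_{t-1}\otimes R_t\otimes\cdots\otimes R_{T-1}\in\cP$. Since the first $t$ kernels coincide with those of $P$, one has $Q=P$ on $\cF_t$ and $Z_t=Y$ $P$-a.s.\ (giving~(ii)); the stepwise dominations yield $P\ll Q\ll R$, that is $P\ll Q\lll\cP$ (giving~(i)); property~(iii) holds by construction; and the one-step identities combine through $E^{Q}[\,\cdot\mid\cF_s]=E^{Q^{1}_{s}(\cdot)}[\,\cdot\,]$ into the martingale property~(iv), each $Z_{s+1}$ being conditionally integrable. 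This establishes $\PCE(\cP)$.

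The main obstacle is the measurable pasting: the price systems from Proposition~\ref{pr:oneStep} exist only pointwise in $\omega$, so the crux is to realize them as a single universally measurable kernel. This is precisely what the auxiliary coding $\cD_t$ and the analyticity of Lemma~\ref{le:Danalytic} are built for, reducing the difficulty to one application of Jankov--von Neumann. The remaining care concerns the universal---rather than Borel---measurability of $Z_s$ and of the transition kernels $P_s$ (handled by composing universally measurable maps and passing to Borel versions off null sets), and the bookkeeping of the $\cP$-polar exceptional sets $N_s$, which do not affect the quasi-sure conclusion.
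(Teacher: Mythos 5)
Your proof is correct and follows essentially the same route as the paper: localize via Lemma~\ref{le:NAFailPolar}, get pointwise existence from Proposition~\ref{pr:oneStep} together with the coding of Remark~\ref{rk:coding}, select universally measurably via Lemma~\ref{le:Danalytic} and Jankov--von Neumann (Lemma~\ref{le:JvNselectionThm}), and paste the kernels by Fubini; the only cosmetic difference is that you construct all kernels $Q^{1}_{s},R_{s}$ first and assemble the product measures once, whereas the paper extends one period at a time and iterates. One small imprecision: in Remark~\ref{rk:coding} one has $dQ^{1}/d\tilde Q=Y^{1}/\alpha^{1}$, so with $\tilde Q:=R$ the re-encoded measure satisfies $Q^{1}\sim R$ rather than $Q^{1}=R$; this does not affect the nonemptiness of the sections, since $P_{s}(\omega)\ll R\sim Q^{1}$ and $Q^{1}\ll R\ll R'$ for some $R'\in\cP_{s}(\omega)$ by $R\lll\cP_{s}(\omega)$.
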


\begin{proof}
  Let $t\in\{0,\ldots,T-1\}$, $P\in \cP$ and $Y\in L^{0}_{P}(\cF_{t},\Int K^{*}_{t})$; by choosing a suitable version we may assume that $Y$ is Borel. We first construct the extension to time $t+1$.
  
  The measure $P$ is of the form $P=P_{t-1}\otimes P_{t}(\cdot) \otimes \cdots \otimes P_{T-1}(\cdot)$, where $P_{t-1}=P|_{\cF_{t}}$ and   $P_{t}(\cdot): \Omega_{t}\to \fP(\Omega_{1})$ is a Borel kernel such that $P_{t}(\omega)\in\cP_{t}(\omega)$ for $P_{t-1}$-a.e.\ $\omega$, and similarly for $P_{t+1}(\cdot), \dots, P_{T-1}(\cdot)$.
  
  Let $\omega\in\Omega_{t}$ be such that $Y(\omega)\in\Int K^{*}_{t}(\omega)$, $P_{t}(\omega)\in\cP_{t}(\omega)$ and $\NA2(t,\omega)$ holds; these $\omega$ form a set of full $P_{t-1}$-measure by Lemma~\ref{le:NAFailPolar}. Then, the one-step result of Proposition~\ref{pr:oneStep} and Remark~\ref{rk:coding} show that there exist $(Q_{t}(\omega),\alpha(\omega),R_{t}(\omega))\in\cD_{t}(\omega)\times\cP_{t}(\omega)$ such that 	
  $$
	  \alpha(\omega) E^{Q_{t}(\omega)}[S_{t+1}(\omega,\cdot)]=Y(\omega) \quad\mbox{and}\quad P_{t}(\omega)\ll  Q^{1}_{t}(\omega)\ll R_{t}(\omega).
	$$
	Moreover, it follows from Lemma~\ref{le:Danalytic} that the map $\omega\mapsto (Q_{t}(\omega),\alpha(\omega), R_{t}(\omega))$ can be chosen in a universally measurable way; cf.\ Lemma~\ref{le:JvNselectionThm}. Using the opposite direction of Remark~\ref{rk:coding}, setting $Z_{t+1}(\omega,\cdot):=\alpha(\omega) S_{t+1}(\omega,\cdot) D(Q_{t}(\omega))$ yields 
	$$
	  Z_{t+1}(\omega,\cdot)\in L^{0}_{Q_{t}^{1}(\omega)}(\cF_{1};\Int K^{*}_{t+1}(\omega,\cdot)) \quad \mbox{with}\quad  E^{Q_{t}^{1}(\omega)}[Z_{t+1}(\omega,\cdot)]=Y(\omega). 
 $$
 We see from its definition that $Z_{t+1}$ is $\cF_{t+1}$-measurable. Moreover, Fubini's theorem shows that 
 \begin{align*}
   Q&:=P_{t-1}\otimes Q_{t}^{1}(\cdot) \otimes P_{t+1}(\cdot) \otimes \cdots \otimes P_{T-1}(\cdot),\\
   R&:=P_{t-1}\otimes R_{t}(\cdot) \otimes P_{t+1}(\cdot) \otimes \cdots \otimes P_{T-1}(\cdot)
 \end{align*}
 are measures satisfying $P\ll Q\ll R \in \cP$ and $P=Q$ on $\cF_{t}$ as well as $E^{Q}[Z_{t+1}|\cF_{t}]=Y=:Z_{t}$ $Q$-a.s.; here the last conclusion uses that the components of $Z$ are nonnegative. Thus, we have constructed the desired extension to time $t+1$.
 
 We may iterate this argument, using $Q$ instead of $P$ and $Z_{t+1}$ instead of~$Y$, to find the required extension of $(P,Y)$ up to time $T$.
\end{proof}

%%%%%%%%%%%%%%%%%%%%%%%%%%%%%%%%%%%%%
\appendix
\section{Appendix}

\subsection{Measure Theory}

Given a measurable space $(\Omega,\cA)$, let $\fP(\Omega)$ the set of all probability measures on $\cA$. The \emph{universal completion} of $\cA$ is the $\sigma$-field $\cap_{P\in\fP(\Omega)} \cA^P$, where $\cA^P$ is the $P$-completion of $\cA$. When $\Omega$ is a topological space with Borel $\sigma$-field $\cB(\Omega)$, we always endow $\fP(\Omega)$ with the topology of weak convergence. Suppose that $\Omega$ is Polish, then $\fP(\Omega)$ is also Polish.  A subset $A\subset\Omega$ is \emph{analytic} if it is the image of a Borel subset of another Polish space under a Borel-measurable mapping. Analytic sets are stable under countable union and intersection, under forward and inverse images of Borel functions, but not under complementation: the complement of an analytic set is called \emph{co-analytic} and it is not analytic unless it is Borel. Any Borel set is analytic, and any analytic set is universally measurable; i.e., measurable for the universal completion of~$\cB(\Omega)$. We refer to \cite[Chapter~7]{BertsekasShreve.78} for these results and further background.

\subsection{Random Sets}

Let $(\Omega,\cA)$ be a measurable space. A mapping $\Psi$ from $\Omega$ into the power set~$2^{\R^{d}}$ will be called a \emph{random set} in $\R^{d}$ and its \emph{graph} is defined as
\[
  \graph(\Psi)=\{(\omega,x):\, \omega\in\Omega,\, x\in \Psi(\omega)\}\subset \Omega\times \R^{d}.
\]
We say that $\Psi$ is \emph{$\cA$-measurable} (\emph{weakly $\cA$-measurable}) if
$$
  \{\omega\in\Omega:\, \Psi(\omega)\cap O\neq \emptyset\}\in \cA\quad \mbox{for all closed (open) $O\subset \R^{d}$.}
$$
Moreover, $\Psi$ is called closed (convex, etc.) if $\Psi(\omega)$ is closed (convex, etc.) for all $\omega\in\Omega$. We emphasize that measurability is \emph{not} defined via the measurability of the graph, as it is sometimes done in the literature.

\pagebreak[2]

\begin{lemma}\label{le:randomSets}
  Let $(\Omega,\cA)$ be a measurable space and let $\Psi$ be a closed, nonempty random set in $\R^{d}$. The following are equivalent:
  \begin{enumerate}
    \item  $\Psi$ is $\cA$-measurable.
    \item  $\Psi$ is weakly $\cA$-measurable.
    \item The distance function $d(\Psi,y)=\inf\{x\in \Psi:\, |x-y|\}$ is $\cA$-measurable for all $y\in\R^{d}$.     
   \item There exist $\cA$-measurable functions $(\psi_{n})_{n\geq1}$ such that $\Psi=\overline{\{\psi_{n}, n \geq 1\}}$ (``Castaing representation'') .
  \end{enumerate}
  Moreover, (i)--(iv) imply that
  \begin{enumerate}
    \item[(v)]  $\graph(\Psi)$ is $\cA\times\cB(\R^{d})$-measurable.
    \item[(vi)]  The dual cone $\Psi^{*}$ is $\cA$-measurable.
    \item[(vii)] $\graph(\Int \Psi^{*})$ is $\cA\times\cB(\R^{d})$-measurable.
    \item[(viii)] There exists an $\cA$-measurable selector $\psi$ of $\Psi^{*}$ satisfying $\psi\in \Int \Psi^{*}$ on $\{\Int \Psi^{*}\neq\emptyset\}$.  
\end{enumerate}
  If $\cA$ is universally complete, then (v) is equivalent to (i)--(iv).
\end{lemma}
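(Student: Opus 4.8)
The plan is to first dispose of the equivalence of (i)--(iv), which is the classical Castaing theory for closed-valued multifunctions into a separable metric space, and then to read off (v)--(viii) from a Castaing representation together with the Measurable Maximum Theorem. The implication (i)$\Rightarrow$(ii) is automatic, since every open $O\subset\R^{d}$ is a countable union of closed sets, so $\{\Psi\cap O\neq\emptyset\}$ is a countable union of sets $\{\Psi\cap C\neq\emptyset\}$. For (ii)$\Leftrightarrow$(iii) I would use $\{d(y,\Psi)<r\}=\{\Psi\cap B(y,r)\neq\emptyset\}$ for open balls, translating weak measurability and measurability of the distance functions into one another through a countable dense set of centres and rational radii. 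For (iii)$\Rightarrow$(iv) I would invoke the Kuratowski--Ryll-Nardzewski selection theorem (cf.\ \cite{AliprantisBorder.06}) to produce measurable selectors of the weakly measurable multifunctions $\omega\mapsto\Psi(\omega)\cap\overline{B}(y_{k},1/j)$ on the measurable sets where these are nonempty, yielding a countable family dense in $\Psi(\omega)$ for every $\omega$; the converse (iv)$\Rightarrow$(iii) is immediate from $d(y,\Psi(\omega))=\inf_{n}|y-\psi_{n}(\omega)|$. Finally, (ii)$\Rightarrow$(i) is where the structure of $\R^{d}$ enters: writing $\R^{d}=\bigcup_{m}K_{m}$ with $K_{m}$ compact reduces an arbitrary closed $C$ to the compact sets $C\cap K_{m}$, and for compact $C'$ one has $\{\Psi\cap C'\neq\emptyset\}=\{\inf_{x\in C'\cap D}d(x,\Psi)=0\}$ for a countable dense $D\subset C'$, which is measurable by (iii).

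Granting (i)--(iv), statement (v) follows by fixing a Castaing representation $(\psi_{n})$ and observing that $\graph(\Psi)=\{(\omega,x):\inf_{n}|x-\psi_{n}(\omega)|=0\}$ is defined through countably many jointly $\cA\times\cB(\R^{d})$-measurable functions. The genuinely delicate point, and the step I expect to be the main obstacle, is (vi): I want $\Psi^{*}$ to be $\cA$-measurable \emph{without} assuming $\cA$ complete, so the tempting route through $\graph(\Psi^{*})$ and a projection is unavailable. My plan is to verify (iii) for $\Psi^{*}$ by computing its distance function via convex duality. Since $\Psi^{*}$ is a closed convex cone, the support-function formula for the distance to a convex set combined with the bipolar theorem gives $d(z,\Psi^{*}(\omega))=\max\{0,\ \sup_{w}\br{w,-z}/|w|\}$, where the supremum runs over $w\in\widehat\Psi(\omega)\setminus\{0\}$ and $\widehat\Psi(\omega)$ is the closed convex conical hull of $\Psi(\omega)$, equal to the bidual $(\Psi^{*}(\omega))^{*}$. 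Because the rational conical combinations of the $\psi_{n}(\omega)$ are dense in $\widehat\Psi(\omega)$, this supremum may be replaced by a countable supremum of functions $\omega\mapsto\br{w_{j}(\omega),-z}/|w_{j}(\omega)|$, each $w_{j}$ a fixed rational conical combination of the $\psi_{n}$; hence $d(z,\Psi^{*}(\cdot))$ is $\cA$-measurable for every $z$, which is (iii) for $\Psi^{*}$.

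For (vii) I would bypass $\Psi^{*}$ and characterise its interior directly through $\Psi$: one checks that $y\in\Int\Psi^{*}(\omega)$ if and only if $\br{x,y}\ge\eps|x|$ for all $x\in\Psi(\omega)$ and some $\eps>0$, equivalently $\br{\psi_{n}(\omega),y}\ge\tfrac1k|\psi_{n}(\omega)|$ for all $n$ and some $k\ge1$. Thus $\graph(\Int\Psi^{*})=\bigcup_{k}\bigcap_{n}\{(\omega,y):\br{\psi_{n}(\omega),y}\ge\tfrac1k|\psi_{n}(\omega)|\}$ is $\cA\times\cB(\R^{d})$-measurable. For (viii) I would apply the Measurable Maximum Theorem \cite{AliprantisBorder.06} to the Carath\'eodory function $(\omega,y)\mapsto\inf_{n:\,\psi_{n}(\omega)\neq0}\br{\psi_{n}(\omega),y}/|\psi_{n}(\omega)|$ over the nonempty, compact-valued, measurable multifunction $\omega\mapsto\Psi^{*}(\omega)\cap\{|y|\le1\}$ (measurable by (vi), nonempty since $0\in\Psi^{*}$), obtaining a measurable maximiser $\psi$. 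It is a selector of $\Psi^{*}$ by construction, and on $\{\Int\Psi^{*}\neq\emptyset\}$ the maximal value is strictly positive---any interior point can be rescaled into the unit ball while keeping this depth functional positive---so $\psi$ lands in $\Int\Psi^{*}$ there, as required.

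It remains to prove, under the extra hypothesis that $\cA$ is universally complete, that (v) implies (i). Here the projection becomes legitimate: for closed $C$ one has $\{\Psi\cap C\neq\emptyset\}=\proj_{\Omega}\big(\graph(\Psi)\cap(\Omega\times C)\big)$, and since the set in parentheses lies in $\cA\times\cB(\R^{d})$, the measurable projection theorem (cf.\ \cite{BertsekasShreve.78}) places its projection in $\cA$, giving (i) and closing the circle. The crux of the whole argument is (vi); once the duality computation of $d(z,\Psi^{*})$ as a countable supremum is in place, (vii) and (viii) follow with only bookkeeping, while (v) and the equivalences (i)--(iv) are classical.
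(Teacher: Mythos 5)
Your proposal is correct, but it follows a genuinely different route from the paper's, which is much shorter: the paper disposes of (i)--(vi) entirely by citing \cite{Rockafellar.76}, proves (vii) by writing $\graph(\Int\Psi^{*})=\bigcap_{n}\{(\omega,y):\br{\psi_{n}(\omega),y}>0\mbox{ or }\psi_{n}(\omega)=0\}$ for a Castaing representation $(\psi_{n})$ of $\Psi$, and proves (viii) by taking a Castaing representation $(\phi_{n})$ of $\Psi^{*}$ and using the single selector $\phi=\sum_{n}2^{-n}\phi_{n}$, which is interior wherever the interior is nonempty because a nondegenerate convex combination of any point of a convex set with an interior point is interior. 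You instead prove (i)--(iv) from scratch (Kuratowski--Ryll-Nardzewski plus $\sigma$-compactness of $\R^{d}$), obtain (vi) via the explicit duality formula $d(z,\Psi^{*}(\omega))=\max\{0,\sup_{w}\br{w,-z}/|w|\}$ over rational conical combinations of the $\psi_{n}$ --- a self-contained substitute for the citation, and one that correctly avoids any completeness assumption on $\cA$, which matters since the paper applies the lemma to Borel $\sigma$-fields --- and you get (vii)--(viii) from the uniform characterization $y\in\Int\Psi^{*}(\omega)\Leftrightarrow\br{\psi_{n}(\omega),y}\ge|\psi_{n}(\omega)|/k$ for all $n$ and some $k$, plus the Measurable Maximum Theorem (the same tool the paper itself uses in Lemma~\ref{le:NAFailPolar}). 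Your uniform characterization in (vii) is in fact more robust than the paper's pointwise one: strict positivity against a dense sequence does not pass to limits, so the paper's identity can fail for an unlucky Castaing representation --- e.g.\ for $\Psi\equiv\R^{2}_{+}$ with $(\psi_{n})$ enumerating $(\Q\cap(0,\infty))^{2}$, the point $y=(0,1)$ satisfies $\br{\psi_{n},y}>0$ for all $n$ yet lies on the boundary of $\Psi^{*}=\R^{2}_{+}$ --- whereas your inequality $\br{\psi_{n},y}\ge|\psi_{n}|/k$ is stable under closure and characterizes the interior for arbitrary closed $\Psi$, not only cones.

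One small patch is needed in your proof of (viii): on the $\cA$-measurable set $\{\Psi=\{0\}\}=\bigcap_{n}\{\psi_{n}=0\}$, your depth functional is an infimum over an empty index set, hence not a (real-valued) Carath\'eodory function. Since $\Psi^{*}=\R^{d}$ there, simply split this set off and use any constant unit vector as the selector on it, applying the Measurable Maximum Theorem only on the complement. With that one-line fix the argument is complete.
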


\begin{proof}
  We refer to \cite{Rockafellar.76} for the results concerning (i)--(vi). Let (iv) hold, then we have the representation 
  \begin{align*}
    \graph(\Int \Psi^{*})
    &=\{(\omega,y)\in\Omega\times\R^{d}:\, \br{x,y}>0\mbox{ for all } x\in \Psi(\omega)\setminus\{0\}\}\\
    &=\bigcap_{n\geq1}\{(\omega,y)\in\Omega\times\R^{d}:\, \br{\psi_{n}(\omega),y}>0 \mbox{ or }\psi_{n}(\omega)=0\}
  \end{align*}
  which readily implies (vii). 
  
%  SKETCH for ri: Take $e_{i}'$ a mbl spanning system (by projection/argmin and write
%  \begin{align*}
%    \graph(\ri \Psi^{*})
%    &=\bigcup_{k\geq1} \{(\omega,y)\in\Omega\times\R^{d}:\, \br{x,y\pm k^{-1}e_{i}'}\geq 0\mbox{ for all } x\in \Psi(\omega)\}\\
%    &=\bigcup_{k\geq1} \bigcap_{n\geq1} \{(\omega,y)\in\Omega\times\R^{d}:\, \br{\psi_{n},y\pm k^{-1}e_{i}'}\geq 0\}
%  \end{align*}  
  
  Finally, let (vi) hold, then $\Psi^{*}$ has a Castaing representation $(\phi_{n})$. Let $\phi:=\sum_{n} 2^{-n} \phi_{n}$, then $\phi$ is $\cA$-measurable and $\Psi^*$-valued since $\Psi^*$ is closed and convex. Let $\omega\in\Omega$ be such that $\Int \Psi^*(\omega)\neq\emptyset$. By the density, at least one of the points $\phi_{n}(\omega)\in \Psi^*(\omega)$ must lie in the interior of $\Psi^*(\omega)$. Moreover, since $\Psi^*(\omega)$ is convex, we observe that a nondegenerate convex combination of a point in $\Psi^*(\omega)$ with an interior point of $\Psi^*(\omega)$ is again an interior point. These two facts yield that $\phi(\omega)\in \Int \Psi^*(\omega)$ as desired. (This applies to any closed and nonempty convex random set, not necessarily of the form~$\Psi^{*}$.)
\end{proof}

In some cases we need to select from random sets in infinite-dimensional spaces, or random sets that are not closed. The following is sufficient for our purposes.

\begin{lemma}[Jankov--von Neumann]\label{le:JvNselectionThm}
  Let $\Omega, \Omega'$ be Polish spaces and let $\Gamma\subset \Omega\times\Omega'$ be an analytic set. Then the projection $\pi_{\Omega}(\Gamma)\subset \Omega$ is universally measurable and there exists a universally measurable function $\psi:\pi_{\Omega}(\Gamma)\to \Omega'$ whose graph is contained in $\Omega'$.
\end{lemma}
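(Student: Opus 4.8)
The plan is to prove the theorem by representing the analytic set $\Gamma$ as a continuous image of the Baire space $\cN:=\N^{\N}$ and then selecting, for each point of the projection, the leftmost branch of the corresponding fibre. (I note that in the stated conclusion the graph of $\psi$ should of course be contained in $\Gamma$, i.e.\ $\psi$ is a selector of $\Gamma$.) If $\Gamma=\emptyset$ there is nothing to prove, so assume $\Gamma\neq\emptyset$. By the definition of analyticity there is a continuous surjection $f=(f_{1},f_{2}):\cN\to\Gamma$, where $f_{1}:\cN\to\Omega$ and $f_{2}:\cN\to\Omega'$ are continuous. Then $\pi_{\Omega}(\Gamma)=f_{1}(\cN)$ is the continuous image of a Polish space, hence analytic and in particular universally measurable; this settles the first assertion.

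For the second assertion I would equip $\cN$ with the lexicographic order and use that every nonempty closed set $F\subset\cN$ has a lexicographically least element, obtained by defining its coordinates inductively via $x_{n}:=\min\{y_{n}:y\in F,\ y_{k}=x_{k}\text{ for }k<n\}$, each minimum being over a nonempty subset of $\N$, with closedness of $F$ guaranteeing $x=(x_{n})_{n}\in F$. For $x\in\pi_{\Omega}(\Gamma)$ the fibre $f_{1}^{-1}(\{x\})$ is a nonempty closed subset of $\cN$ (as $f_{1}$ is continuous), so it possesses a lex-least element $g(x)$. Setting $\psi:=f_{2}\circ g$ yields $\psi:\pi_{\Omega}(\Gamma)\to\Omega'$ with $(x,\psi(x))=f(g(x))\in\Gamma$, so the graph of $\psi$ lies in $\Gamma$ as required.

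The heart of the matter, and the step I expect to be the main obstacle, is the measurability of $g$. For a finite sequence $s=(s_{0},\dots,s_{n-1})\in\N^{<\N}$ let $N_{s}\subset\cN$ denote the basic clopen set of infinite sequences extending $s$; since these generate the Borel $\sigma$-field of $\cN$, it suffices to control $\{x:g(x)\in N_{s}\}$. By the definition of the leftmost branch, $g(x)$ extends $s$ exactly when the fibre of $x$ meets $N_{s}$ but meets no $N_{t}$ for a $t$ lying strictly to the left of $s$, which gives the identity
\[
  \{x:g(x)\in N_{s}\}=f_{1}(N_{s})\cap\bigcap_{k<n}\bigcap_{j<s_{k}}\big(\Omega\setminus f_{1}(N_{(s_{0},\dots,s_{k-1},j)})\big).
\]
Each $f_{1}(N_{t})$ is the continuous image of a Borel subset of $\cN$, hence analytic, so the right-hand side is a finite Boolean combination of analytic sets and therefore lies in the $\sigma$-field generated by the analytic sets. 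As analytic sets are universally measurable and the universal $\sigma$-field is complete, this $\sigma$-field is contained in the universal completion, whence $g$ is universally measurable (as a map into $\cN$).

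Finally, I would conclude that $\psi=f_{2}\circ g$ is universally measurable: for Borel $B\subset\Omega'$ the set $f_{2}^{-1}(B)$ is Borel since $f_{2}$ is continuous, and $g^{-1}$ of a Borel subset of $\cN$ is universally measurable by the previous step. The essential subtlety, and the reason the selector is only universally (rather than Borel) measurable, is the appearance of the complements $\Omega\setminus f_{1}(N_{t})$ in the displayed identity: these are merely co-analytic, so the computation genuinely leaves the class of analytic sets and forces one to work in the universal completion.
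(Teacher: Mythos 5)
Your proof is correct, but it cannot be compared to a proof ``in the paper'': the paper does not prove this lemma at all, it simply cites \cite[Proposition 7.49]{BertsekasShreve.78}. What you have written is a self-contained version of the classical Jankov--von Neumann argument---represent the nonempty analytic set $\Gamma$ as a continuous image $f=(f_{1},f_{2})$ of the Baire space $\cN$ and select the lexicographically least element of each fibre $f_{1}^{-1}(\{x\})$---which is essentially the argument behind the cited reference. Your key identity expressing $\{x:g(x)\in N_{s}\}$ as a finite Boolean combination of the analytic sets $f_{1}(N_{t})$ is stated and justified correctly, the passage from universal measurability of the preimages of the generating clopen sets $N_{s}$ to universal measurability of $g$, and hence of $\psi=f_{2}\circ g$, is sound, and you rightly flag the typo in the statement (the graph must lie in $\Gamma$, not in $\Omega'$). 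The one place where you are slightly cavalier is the opening step: with the paper's definition of analyticity (image of a Borel subset of a Polish space under a Borel map), the existence of a continuous surjection $\cN\to\Gamma$ is not ``the definition'' but an equivalent characterization, and the equivalence is a standard yet nontrivial theorem of descriptive set theory that should be cited rather than asserted as definitional. That caveat aside, your argument is complete, and it offers something the paper does not: an explicit proof that makes transparent why the selector is only measurable with respect to the $\sigma$-field generated by the analytic sets (hence universally measurable) rather than Borel---namely the co-analytic complements $\Omega\setminus f_{1}(N_{t})$ appearing in your displayed identity.
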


We refer to \cite[Proposition 7.49]{BertsekasShreve.78} for a proof. In many applications we start with a random set $\Psi:\Omega\to 2^{\Omega'}$ such that $\Gamma:=\graph(\Psi)$ is analytic. Noting that $\pi_{\Omega}(\Gamma)=\{\Psi\neq\emptyset\}$, Lemma~\ref{le:JvNselectionThm} then yields a universally measurable selector for $\Psi$ on the set $\{\Psi\neq\emptyset\}$.

%%%%%%%%%%%%%%%%%%%%%%%%%%%%%%%%%%%%%%%%%%%%%%%%%%%%%%%%%%%%%%%%%
%\bibliography{stochfin}
%\bibliographystyle{plain}

\newcommand{\dummy}[1]{}

%%%%%%%%%%%%%%%%%%%%%%%%%%%%%%%%%%%%%%%%%%%%%%%%%%%%%%%%%%%%%%%%%

\end{document}